\DeclareMathOperator*{\argmax}{argmax}
 \newtheorem{lem}{Lemma}
\newtheorem{corollary}{Corollary}
\def\BibTeX{{\rm B\kern-.05em{\sc i\kern-.025em b}\kern-.08em T\kern-.1667em\lower.7ex\hbox{E}\kern-.125emX}}
\begin{document}

\title{A Semi-Linear Approximation of the First-Order Marcum $Q$-function with Application to Predictor Antenna Systems}

\author{Hao~Guo,~\IEEEmembership{Student~Member,~IEEE},
        Behrooz~Makki,~\IEEEmembership{Senior~Member,~IEEE},\\
        Mohamed-Slim Alouini,~\IEEEmembership{Fellow,~IEEE},
        and Tommy~Svensson,~\IEEEmembership{Senior~Member,~IEEE}
\thanks{H. Guo and T. Svensson are with the Department of Electrical Engineering, Chalmers University of Technology, 41296 Gothenburg, Sweden (email: hao.guo@chalmers.se; tommy.svensson@chalmers.se).}
\thanks{B. Makki is with Ericsson Research, 41756 Gothenburg, Sweden (email: behrooz.makki@ericsson.com).}
\thanks{M.-S. Alouini is with the King Abdullah University of Science and Technology,
Thuwal 23955-6900, Saudi Arabia (e-mail: slim.alouini@kaust.edu.sa).}}

\IEEEtitleabstractindextext{\begin{abstract}
First-order Marcum $Q$-function is observed in various problem formulations. However, it is not an easy-to-handle function. For this reason, in this paper, we first present a semi-linear approximation of the Marcum $Q$-function. Our proposed approximation is useful because it simplifies, e.g., various integral calculations including Marcum $Q$-function as well as different operations such as parameter optimization. Then, as an example of interest, we apply our proposed approximation approach to the performance analysis of predictor antenna (PA) systems. Here, the PA system is referred to as a system with two sets of antennas on the roof of a vehicle. Then, the PA positioned in the front of the vehicle can be used to improve the channel state estimation for data transmission of the receive antenna  that is aligned behind the PA. Considering spatial mismatch due to the mobility, we derive closed-form expressions for the instantaneous and average throughput as well as the throughput-optimized rate allocation. As we show, our proposed approximation scheme enables us to analyze PA systems with high accuracy. Moreover, our results show that rate adaptation can improve the performance of PA systems with different levels of spatial mismatch.
\end{abstract}

\begin{IEEEkeywords}
Backhaul, channel state information (CSI),  integrated access and backhaul (IAB), linear approximation, Marcum $Q$-function, mobility, mobile relay, outage probability, predictor antenna, rate adaptation, spatial correlation, throughput.
\end{IEEEkeywords}
}

\maketitle

\section{Introduction}

The first-order Marcum $Q$-function\footnote{To simplify the analysis, our paper concentrates on the approximation of the first-order Marcum-$Q$ function. However, our approximation technique can be easily extended to the cases with different orders of Marcum $Q$-function.} is defined as  \cite[Eq. (1)]{Bocus2013CLapproximation}
\begin{align}\label{eq_Qdefi}
    Q_1(\alpha,\beta) = \int_{\beta}^{\infty} xe^{-\frac{x^2+\alpha^2}{2}}I_0(x\alpha)\text{d}x,
\end{align}
where $\alpha, \beta \geq 0$ and $I_n(x) = (\frac{x}{2})^n \sum_{i=0}^{\infty}\frac{(\frac{x}{2})^{2i} }{i!\Gamma(n+i+1)}$ is the $n$-order modified Bessel function of the first kind, and $\Gamma(z) = \int_0^{\infty} x^{z-1}e^{-x} \mathrm{d}x$ represents the Gamma function. Reviewing the literature,  Marcum $Q$-function has appeared in many areas such as statistics/signal detection \cite{helstrom1994elements}, and in the performance analysis of different setups such as temporally correlated channels \cite{Makki2013TCfeedback}, spatially correlated channels \cite{Makki2011Eurasipcapacity},  free-space optical (FSO) links \cite{Makki2018WCLwireless}, relay networks \cite{Makki2016TVTperformance}, as well as  cognitive radio and radar systems \cite{Simon2003TWCsome,Suraweera2010TVTcapacity,Kang2003JSAClargest,Chen2004TCdistribution,Ma2000JSACunified,Zhang2002TCgeneral,Ghasemi2008ICMspectrum,Digham2007TCenergy, simon2002bookdigital,Cao2016CLsolutions,sofotasios2015solutions,Cui2012ELtwo,Azari2018TCultra,Alam2014INFOCOMWrobust,Gao2018IAadmm,Shen2018TVToutage,Song2017JLTimpact,Tang2019IAan,ermolova2014laplace,peppas2013performance}.

The presence of Marcum $Q$-function, however, makes the mathematical analysis challenging, because it is  difficult to manipulate  with no closed-form  expressions especially when it appears in parameter optimizations and integral calculations. For this reason, several methods have been developed in \cite{Bocus2013CLapproximation,Fu2011GLOBECOMexponential,zhao2008ELtight,Simon2000TCexponential,annamalai2001WCMCcauchy,Sofotasios2010ISWCSnovel,Li2010TCnew,andras2011Mathematicageneralized,Gaur2003TVTsome,Kam2008TCcomputing,Corazza2002TITnew,Baricz2009TITnew,chiani1999ELintegral,jimenez2014connection}  to bound/approximate the  Marcum $Q$-function. For example, \cite{Fu2011GLOBECOMexponential,zhao2008ELtight} have proposed  modified forms of the function, while \cite{Simon2000TCexponential,annamalai2001WCMCcauchy} have derived exponential-type bounds which are good for  bit error rate analysis at high signal-to-noise ratios (SNRs). Other types of bounds are expressed by, e.g., error function \cite{Kam2008TCcomputing} and Bessel functions \cite{Corazza2002TITnew,Baricz2009TITnew,chiani1999ELintegral}. Some alternative methods have been also developed in \cite{Sofotasios2010ISWCSnovel,Li2010TCnew,andras2011Mathematicageneralized,Bocus2013CLapproximation,Gaur2003TVTsome}. Although each of these approximation/bounding techniques are fairly tight for their considered problem formulation, they are still based on difficult functions, or have complicated summation/integration formations, which may be not easy to deal with in, e.g., integral calculations and parameter optimizations.

In this paper, we propose a simple and semi-linear approximation for Marcum Q-function, and present an application of the developed approximation in improving the backhaul performance of moving relays. Particularly, the contributions of the paper are two-fold as highlighted in the following.

\subsection{Semi-linear Approximation of Marcum Q-function}
We first propose a simple semi-linear approximation of the first-order Marcum $Q$-function (Lemma \ref{Lemma1}, Corollaries \ref{coro1}-\ref{coro2}). As we explain in the following (Lemmas \ref{Lemma1}-\ref{Lemma4}), in contrast to the schemes of \cite{Bocus2013CLapproximation,Fu2011GLOBECOMexponential,zhao2008ELtight,Simon2000TCexponential,annamalai2001WCMCcauchy,Sofotasios2010ISWCSnovel,Li2010TCnew,andras2011Mathematicageneralized,Gaur2003TVTsome,Kam2008TCcomputing,Corazza2002TITnew,Baricz2009TITnew,chiani1999ELintegral,jimenez2014connection}, our proposed approximation is not tight at the tails of  the Marcum $Q$-function because we use simple straight line(s) to approximate a curve. Therefore, it is not useful in, e.g., error probability-based problem formulations.  On the other hand, the advantages of our proposed approximation method, compared to \cite{Bocus2013CLapproximation,Fu2011GLOBECOMexponential,zhao2008ELtight,Simon2000TCexponential,annamalai2001WCMCcauchy,Sofotasios2010ISWCSnovel,Li2010TCnew,andras2011Mathematicageneralized,Gaur2003TVTsome,Kam2008TCcomputing,Corazza2002TITnew,Baricz2009TITnew,chiani1999ELintegral,jimenez2014connection}, are 1) its simplicity, and 2) tightness in the moderate values of the function. This is important because, as observed in, e.g., \cite{Bocus2013CLapproximation,Fu2011GLOBECOMexponential,Makki2013TCfeedback,Makki2011Eurasipcapacity,Makki2018WCLwireless,Makki2016TVTperformance,Simon2003TWCsome,Suraweera2010TVTcapacity,Ma2000JSACunified,Digham2007TCenergy,Cao2016CLsolutions,sofotasios2015solutions,Azari2018TCultra,Alam2014INFOCOMWrobust,Gao2018IAadmm,Shen2018TVToutage,Song2017JLTimpact,Tang2019IAan}, in different applications,  Marcum $Q$-function is typically combined with other functions which tend to zero at the tails of the Marcum $Q$-function. In such cases, the inaccuracy of the approximation at the tails does not affect the tightness of the final analysis. Thus, our proposed scheme provides tight and simple approximation results for different problem formulations such as capacity calculation \cite{Makki2011Eurasipcapacity,Suraweera2010TVTcapacity}, throughput/average rate derivation \cite{Makki2013TCfeedback,Makki2018WCLwireless,Makki2016TVTperformance}, energy detection of unknown signals over various multipath fading channels \cite{Digham2007TCenergy,Cao2016CLsolutions,sofotasios2015solutions}, as well as  performance evaluation of  non-coherent receivers in radar systems \cite{Cui2012ELtwo} (Lemmas \ref{Lemma2}-\ref{Lemma3}). Also, the simplicity of the approximation method makes it possible to perform further analysis such as parameter optimization and to obtain intuitive insights from the derivations.

\subsection{Throughput Optimization of Predictor Antenna Systems} 
To demonstrate the usefulness of the proposed approximation
technique in communication systems, we analyze the
performance of predictor antenna (PA) systems in presence of
spatial mismatch. Here, the PA system is referred to as a setup
with two (sets of) antennas on the roof of a vehicle. The PA
positioned in the front of a vehicle can be used to improve
the channel state estimation for downlink data reception at the
receive antenna (RA) on the vehicle that is aligned behind the
PA \cite{Sternad2012WCNCWusing,DT2015ITSMmaking,BJ2017PIMRCpredictor,phan2018WSAadaptive,Jamaly2014EuCAPanalysis, BJ2017ICCWusing,Apelfrojd2018PIMRCkalman,Jamaly2019IETeffects,Guo2019WCLrate,guo2020power,guo2020rate,guo2020commnet}.

The feasibility of PA setups, which are of interest particularly in public transport systems such as trains and buses, but potentially also for the more design-constrained cars, has been previously shown through experimental tests \cite{Sternad2012WCNCWusing,DT2015ITSMmaking,BJ2017PIMRCpredictor,phan2018WSAadaptive,Jamaly2014EuCAPanalysis, BJ2017ICCWusing,Apelfrojd2018PIMRCkalman}. Particularly, as shown in testbed implementations, e.g.,  \cite{BJ2017PIMRCpredictor,BJ2017ICCWusing}, with a two-antenna PA setup a normalised mean square error  of around -10 dB can be obtained for speeds up to 50 km/h, with  measured predictions horizons up to three times the wavelengths. This is by an order of magnitude better than state-of-the-art Kalman prediction-based systems \cite{Ekman2002,Aronsson2011} with prediction horizon limited to 0.1-0.3 times the wavelength. Moreover, the  European project deliverables, e.g.,  \cite[Chapter 2]{ARTIST4G},  \cite[P. 107]{metis2015d33} and \cite[Chapter 3]{5gcar2019d33}, have  well addressed the feasibility of the PA concept in network-level design. Finally, different works have analyzed the PA system in both frequency division duplex (FDD) \cite{BJ2017ICCWusing,BJ2017PIMRCpredictor,phan2018WSAadaptive} and time division duplex (TDD) \cite{DT2015ITSMmaking,Apelfrojd2018PIMRCkalman} systems, with some developments on addressing the system challenges such as  antenna coupling \cite{Jamaly2014EuCAPanalysis,Jamaly2019IETeffects}, spatial mismatch \cite{Jamaly2019IETeffects,Guo2019WCLrate}, and spectrum underutilization \cite{guo2020power,guo2020rate}. 

Among the challenges of the PA system is the spatial mismatch. If the RA does not arrive in the same position as the PA, the actual channel for the RA would not be identical to the one experienced by the PA before. Such inaccurate channel state information (CSI) estimation will affect the system performance considerably at moderate/high speeds \cite{DT2015ITSMmaking,Guo2019WCLrate}. One way to compensate for this mismatch problem is by interpolating channel samples at the base station (BS) \cite{DT2015ITSMmaking,BJ2017PIMRCpredictor, Apelfrojd2018PIMRCkalman}. Using all RAs as PA in the first time slot, \cite{DT2015ITSMmaking} proposes an interpolation-based scheme for multi-antenna systems with beamforming. Then, \cite{BJ2017PIMRCpredictor} shows that an important part of the implementation of the PA is to identify the sample that is closest in space, and if necessary, interpolate between previous estimates. Also, in \cite{Apelfrojd2018PIMRCkalman} Kalman smoother is used to interpolate larger distances that might occur in TDD. However, interpolation comes with additional overhead due to the increased number of pilots/PAs, and this overhead becomes more severe under FDD setup. This may lead to performance losses, in terms of end-to-end throughput. Also, studies in \cite{DT2015ITSMmaking,BJ2017PIMRCpredictor, Apelfrojd2018PIMRCkalman} have not included analytical analysis for the PA system.

In this paper, we address the spatial  mismatch problem by implementing adaptive rate allocation without increasing the number of pilots/PAs. In our proposed setup, the instantaneous CSI provided by the PA is used to adapt the data rate of the signals sent to the RA from the BS. The problem is cast in the form of throughput maximization. Particularly, we use our developed approximation approach (Lemma \ref{Lemma1}, Corollaries \ref{coro1}-\ref{coro2}) to derive closed-form expressions for the instantaneous and average throughput as well as the optimal rate allocation maximizing the throughput (Lemma \ref{Lemma4}). Moreover, we study the effect of different parameters such as the antennas distance, the vehicle speed, and the processing delay of the BS on the performance of PA setups.

Our paper is different from the state-of-the-art literature because the proposed semi-linear approximation of the first-order Marcum $Q$-function and the derived closed-form expressions for the considered integrals, i.e., Lemmas \ref{Lemma1}-\ref{Lemma4} and Corollaries \ref{coro1}-\ref{coro2}, have not been presented by, e.g., \cite{Bocus2013CLapproximation,helstrom1994elements,Fu2011GLOBECOMexponential,Makki2013TCfeedback,Makki2011Eurasipcapacity,Makki2018WCLwireless,Makki2016TVTperformance,Simon2003TWCsome,Suraweera2010TVTcapacity,Kang2003JSAClargest,Chen2004TCdistribution,Ma2000JSACunified,Zhang2002TCgeneral,Ghasemi2008ICMspectrum,Digham2007TCenergy, simon2002bookdigital,Cao2016CLsolutions,sofotasios2015solutions,Cui2012ELtwo,Azari2018TCultra,Alam2014INFOCOMWrobust,Gao2018IAadmm,Shen2018TVToutage,Song2017JLTimpact,Tang2019IAan,ermolova2014laplace,peppas2013performance,zhao2008ELtight,Simon2000TCexponential,annamalai2001WCMCcauchy,Sofotasios2010ISWCSnovel,Li2010TCnew,andras2011Mathematicageneralized,Gaur2003TVTsome,Kam2008TCcomputing,Corazza2002TITnew,Baricz2009TITnew,chiani1999ELintegral,jimenez2014connection,Sternad2012WCNCWusing,DT2015ITSMmaking,BJ2017PIMRCpredictor,phan2018WSAadaptive,Jamaly2014EuCAPanalysis, BJ2017ICCWusing,Apelfrojd2018PIMRCkalman,Jamaly2019IETeffects,Guo2019WCLrate}. Also, as opposed to \cite{Sternad2012WCNCWusing,DT2015ITSMmaking,BJ2017PIMRCpredictor,phan2018WSAadaptive,Jamaly2014EuCAPanalysis, BJ2017ICCWusing,Apelfrojd2018PIMRCkalman,Jamaly2019IETeffects}, we perform analytical performance evaluation of PA systems with CSIT (T: at the transmitter)-based rate optimization to mitigate the effect of the spatial mismatch. Moreover, compared to our preliminary results in \cite{Guo2019WCLrate}, this paper develops the semi-linear approximation method for the Marcum $Q$-function, and uses our proposed approximation method to analyze the performance of the PA system. Also, we perform deep analysis of the effect of various parameters, such as imperfect CSIT feedback schemes, and processing delay of the BS on the system performance.

The simulation and the analytical results indicate that the proposed semi-linear approximation is useful for the mathematical analysis of different Marcum $Q$-function-based problem formulations. Particularly, our approximation method enables us to represent different Marcum $Q$-function-based integrations and optimizations in closed-form. Considering the PA system, our derived analytical results show that adaptive rate allocation can considerably improve the performance of the PA system in the presence of spatial mismatch. Finally, with different levels of channel estimation, our results show that there exists an optimal speed for the vehicle optimizing the throughput/outage probability, and the system performance is sensitive to the vehicle speed/processing delay as the speed moves away from its optimal value.

This paper is organized as follows. In Section \ref{sec:appro}, we present our proposed semi-linear approximation of the first-order Marcum $Q$-function, and derive closed-form solutions for some integrals of interest. Section \ref{sec:application} deals with the application of the approximation in the PA system,  deriving closed-form expressions for the optimal rate adaptation, the instantaneous throughput as well as the expected throughput.  In this way, Sections \ref{sec:appro} and \ref{sec:application} demonstrate examples on how the proposed approximation can be useful in, respectively, expectation- and optimization-based problem formulations involving the Marcum $Q$-function, respectively. Concluding remarks are provided in Section \ref{sec:conclu}.

\section{Approximation of the first-order Marcum Q-function}\label{sec:appro}
In this section, we present our semi-linear approximation of the cumulative distribution function (CDF) of   the form  $y(\alpha, \beta ) = 1-Q_1(\alpha,\beta)$, with Marcum Q function  $Q_1(\alpha,\beta)$ defined in (\ref{eq_Qdefi}). The idea of this proposed approximation is to use one point and its corresponding slope in that point to create a line  approximating the CDF. The approximation method is summarized in Lemma \ref{Lemma1} as follows.
\begin{lem}\label{Lemma1}
 The CDF of the form $y(\alpha, \beta ) = 1-Q_1(\alpha,\beta)$ can be semi-linearly approximated as $y(\alpha,\beta)\simeq \mathcal{Z}(\alpha, \beta)$ where
\begin{align}\label{eq_lema1}
\mathcal{Z}(\alpha, \beta)=
\begin{cases}
0,  ~~~~~~~~~~~~~~~~~~~~~~~~~~~~~\mathrm{if}~ \beta < c_1  \\ 
 \beta_0 e^{-\frac{1}{2}\left(\alpha^2+\left(\beta_0\right)^2\right)}I_0\left(\alpha\beta_0\right)\left(\beta-\beta_0\right)+\\~~~1-Q_1\left(\alpha,\beta_0\right),  ~~~~~~~~~\mathrm{if}~ c_1 \leq\beta\leq c_2 \\
1,  ~~~~~~~~~~~~~~~~~~~~~~~~~~~~~\mathrm{if}~ \beta> c_2,
\end{cases}
\end{align}
with
\begin{align}\label{eq_alpha0}
   \beta_0 =  \frac{\alpha+\sqrt{\alpha^2+2}}{2},
\end{align}
\begin{align}\label{eq_c1}
   c_1(\alpha) = \max\Bigg(0,\beta_0+\frac{Q_1\left(\alpha,\beta_0\right)-1}{\beta_0 e^{-\frac{1}{2}\left(\alpha^2+\left(\beta_0\right)^2\right)}I_0\left(\alpha\beta_0\right)}\Bigg),
\end{align}
and
\begin{align}\label{eq_c2}
   c_2(\alpha) = \beta_0+\frac{Q_1\left(\alpha,\beta_0\right)}{\beta_0 e^{-\frac{1}{2}\left(\alpha^2+\left(\beta_0\right)^2\right)}I_0\left(\alpha\beta_0\right)}.
\end{align}
\end{lem}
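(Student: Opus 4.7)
The plan is to read Lemma~\ref{Lemma1} as a piecewise construction: the middle branch of (\ref{eq_lema1}) is the tangent line to $y(\alpha,\beta)=1-Q_1(\alpha,\beta)$ at $\beta=\beta_0$, and the outer branches saturate this tangent line at the natural CDF endpoints $0$ and $1$. Under this view, the proof reduces to (i) differentiating $y$, (ii) solving two linear equations for the clipping thresholds $c_1,c_2$, and (iii) justifying the specific anchor $\beta_0$ in (\ref{eq_alpha0}).

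For step~(i), I would differentiate (\ref{eq_Qdefi}) under the integral sign. Because the integrand $xe^{-(x^2+\alpha^2)/2}I_0(x\alpha)$ is continuous in $x$, the fundamental theorem of calculus yields
\[
\frac{\partial Q_1(\alpha,\beta)}{\partial \beta}=-\beta e^{-(\beta^2+\alpha^2)/2}I_0(\alpha\beta),
\]
so $\partial y/\partial \beta=\beta e^{-(\beta^2+\alpha^2)/2}I_0(\alpha\beta)$. Plugging this together with $y(\alpha,\beta_0)=1-Q_1(\alpha,\beta_0)$ into the first-order Taylor polynomial
\[
y(\alpha,\beta)\approx y(\alpha,\beta_0)+\left.\frac{\partial y}{\partial \beta}\right|_{\beta=\beta_0}(\beta-\beta_0)
\]
reproduces the middle branch of (\ref{eq_lema1}) term by term.

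For step~(ii), since $y$ is a CDF in $\beta$ the tangent approximation must be truncated to $[0,1]$. Setting the expression of step~(i) equal to $0$ and solving for $\beta$ yields the inner argument of (\ref{eq_c1}); the outer $\max(0,\cdot)$ enforces $\beta\ge 0$ when that crossing would fall below the admissible range. Setting the tangent line equal to $1$ and solving gives (\ref{eq_c2}) directly. A useful consistency check is that, whenever the max in (\ref{eq_c1}) is inactive, $c_2-c_1=1/y'(\alpha,\beta_0)$, i.e., the $\beta$-run of a line with slope $y'(\beta_0)$ rising by one unit.

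The main obstacle is motivating the specific anchor (\ref{eq_alpha0}), because the construction in steps~(i)--(ii) is valid for \emph{any} $\beta_0>0$ but the lemma commits to the positive root of $2\beta_0^2-2\alpha\beta_0-1=0$. My plan is to interpret $\beta_0$ as the point that best centers the straight segment on the steep transition of the CDF --- for instance, by approximately locating the mode of the Rician-type density $y'(\alpha,\beta)=\beta e^{-(\beta^2+\alpha^2)/2}I_0(\alpha\beta)$ after truncating $I_0$ and $I_1$ in $y''(\beta_0)=0$, or equivalently by minimizing a leading-order approximation error between the tangent line and the true curve on $[c_1,c_2]$. Because such a derivation is intrinsically heuristic (different Bessel truncations and different error metrics produce different quadratics), I expect that the cleanest course is to present (\ref{eq_alpha0}) as a deliberate design prescription, verify its near-optimality numerically, and confine the rigorous content of the proof to the algebraic identifications in steps~(i)--(ii).
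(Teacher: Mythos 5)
Your proposal is correct and follows essentially the same route as the paper: the paper also takes the tangent line at the inflection point of the CDF, uses $\partial Q_1(\alpha,t)/\partial t=-t e^{-(\alpha^2+t^2)/2}I_0(\alpha t)$, clips at $y=0$ and $y=1$ to get $c_1,c_2$, and obtains $\beta_0$ exactly as you anticipate --- by setting $y''=0$ and invoking $I_0(t)\simeq e^{t}/\sqrt{2\pi t}$ to reduce it to the quadratic $2t^2-2\alpha t-1=0$. Your caution that the choice of $\beta_0$ is heuristic is apt; the paper's justification (symmetry of the CDF about the inflection point giving the ``best fit'') is likewise informal.
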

\begin{proof}

We aim to approximate the CDF in the range $y \in [0, 1]$ with respect to $\beta$ by 
\begin{align}\label{eq_YY}
    y-y_0 = m(\beta-\beta_0),
\end{align}
where $\mathcal{C} = (\beta_0,y_0)$ is a point on the CDF curve and $m$ is the slope  of $y(\alpha,\beta)$ at point $\mathcal{C}$. Then, the parts of the line outside this region are replaced by $y=0$ and $y=1$ (see Fig. \ref{fig_CDFilu}).

To obtain a good approximation of the CDF, we select the point $\mathcal{C}$ by finding the steepest slope through solving
\begin{align}\label{eq_partialsquare}
   \beta_0 = \mathop{\arg}_{t} \left\{  \frac{\partial^2\left(1-Q_1(\alpha,t)\right)}{\partial t^2} = 0\right\},
\end{align}
because the Marcum $Q$-function is symmetric around this point so a linear function through point $\mathcal{C}$ gives the best fit. Then, using the derivative of the first-order Marcum $Q$-function with respect to $t$ \cite[Eq. (2)]{Pratt1968PIpartial}
\begin{align}\label{eq_derivativeMarcumQ}
    \frac{\partial Q_1(\alpha,t)}{\partial t} = -t e^{-\frac{\alpha^2+t^2}{2}}I_0(\alpha t),
\end{align}
(\ref{eq_partialsquare}) becomes equivalent to 
\begin{align}
   \beta_0 =  \mathop{\arg}_{t} \left\{\frac{\partial\left(t e^{-\frac{\alpha^2+t^2}{2}}I_0(\alpha t)\right)}{\partial t}=0\right\}.
\end{align}
Using the approximation $I_0(t) \simeq \frac{e^t}{\sqrt{2\pi t}} $ \cite[Eq. (9.7.1)]{abramowitz1999ia} and writing
\begin{align}
    &~~~\frac{\partial\left(\sqrt{\frac{t}{2\pi \alpha}}e^{-\frac{(t-\alpha)^2}{2}}\right)}{\partial t} = 0\nonumber\\
   \Rightarrow &\frac{1}{\sqrt{2\pi\alpha}}\left(\frac{e^{-\frac{(t-\alpha)^2}{2}}}{2\sqrt{t}}+\sqrt{t}e^{-\frac{(t-\alpha)^2}{2}}(\alpha-t)\right)  = 0\nonumber\\
    \Rightarrow &2t^2-2\alpha t-1 =0,
\end{align}
we obtain 
\begin{align}\label{eq_beta0}
   \beta_0  = \frac{\alpha+\sqrt{\alpha^2+2}}{2}
\end{align}
since $\beta\geq0$. In this way, we find the point 
\begin{align}
    \mathcal{C}=\left(\beta_0, 1-Q_1\left(\alpha,\beta_0\right)\right).
\end{align}

To calculate the slope $m$ at the point $\mathcal{C}$, we plug (\ref{eq_beta0}) into (\ref{eq_derivativeMarcumQ}) leading to
\begin{align}\label{eq_m}
    m = \beta_0e^{-\frac{1}{2}\left(\alpha^2+\left(\beta_0\right)^2\right)}I_0\left(\alpha\beta_0\right).
\end{align}
Finally, using (\ref{eq_YY}), (\ref{eq_beta0}) and (\ref{eq_m}), the CDF $y(\alpha, \beta) = 1-Q_1(\alpha,\beta)$ can be approximated  as in (\ref{eq_lema1}). Note that, because the CDF is limited to the range [0 1], the boundaries $c_1$ and $c_2$ in (\ref{eq_lema1}) are obtained by finding $x$ when setting $y=0$ and $y=1$ in (\ref{eq_YY}), which leads to the semi-linear approximation as given in (\ref{eq_lema1}).
\end{proof}

To further simplify the calculation, considering different ranges of $\alpha$, we can simplify $\beta_0$ and, consequently, (\ref{eq_lema1}) as stated in the following.

\begin{corollary}\label{coro1}
For moderate/large values of $\alpha$, we have $y(\alpha,\beta)\simeq\tilde{\mathcal{Z}}(\alpha,\beta)$ where
\begin{align}\label{eq_coro1}
\tilde{\mathcal{Z}}(\alpha,\beta)&\simeq
\begin{cases}
0,  ~~~~~~~~~~~~~~~~~~~~~~~~~~~~\mathrm{if}~\beta < \tilde{c}_1  \\ 
\alpha e^{-\alpha^2}I_0(\alpha^2)(\beta-\alpha) + 
\frac{1}{2}\left(1-e^{-\alpha^2}I_0(\alpha^2)\right), \\ ~~~~~~~~~~~~~~~~~~~~~~~~~~~~~~~\mathrm{if}~\tilde{c}_1 \leq\beta\leq \tilde{c}_2 \\
1,  ~~~~~~~~~~~~~~~~~~~~~~~~~~~~~\mathrm{if}~ \beta> \tilde{c}_2.
\end{cases}\\
&\overset{(a)}{\simeq}
\begin{cases}
0, ~~~~~~~~~~~~~~~~~~~~~~~~~~~~~\mathrm{if}~  \beta < \breve{c}_1  \\ 
\frac{1}{\sqrt{2\pi}}(\beta-\alpha) + 
\frac{1}{2}\left(1-\frac{1}{\sqrt{2\pi\alpha^2}}\right), \\~~~~~~~~~~~~~~~~~~~~~~~~~~~~~~~\mathrm{if}~  \breve{c}_1 \leq\beta\leq \breve{c}_2 \\
1, ~~~~~~~~~~~~~~~~~~~~~~~~~~~~~\mathrm{if}~ \beta>\breve{c}_2,
\end{cases}
\end{align}
with $\tilde{c}_1$ and $\tilde{c}_2$ given in (\ref{eq_tildec1}) and (\ref{eq_tildec2}), $\breve{c}_1$ and $\breve{c}_2$ given in (\ref{eq_dotc1}) and (\ref{eq_dotc2}), respectively.
\end{corollary}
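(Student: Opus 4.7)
The plan is to exploit the asymptotic behavior of $\beta_0$ from Lemma \ref{Lemma1} as $\alpha$ grows. First, I would expand $\beta_0 = \frac{\alpha + \sqrt{\alpha^2+2}}{2}$ via $\sqrt{\alpha^2+2} = \alpha\sqrt{1 + 2/\alpha^2}\approx \alpha + 1/\alpha$, giving $\beta_0 \approx \alpha + 1/(2\alpha)$, which collapses to $\beta_0 \approx \alpha$ for moderate-to-large $\alpha$. This single substitution is the engine behind the corollary: everything else follows by plugging $\beta_0 = \alpha$ into the three ingredients of (\ref{eq_lema1}).

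Next I would propagate $\beta_0 = \alpha$ through each piece. The slope at $\mathcal{C}$ in (\ref{eq_m}) immediately becomes $m = \alpha e^{-\alpha^2}I_0(\alpha^2)$. For the constant term $1 - Q_1(\alpha,\alpha)$, I would invoke the classical diagonal identity $Q_1(\alpha,\alpha) = \tfrac{1}{2}\bigl(1 + e^{-\alpha^2}I_0(\alpha^2)\bigr)$, which itself falls out of the well-known symmetry relation $Q_1(a,b) + Q_1(b,a) = 1 + e^{-(a^2+b^2)/2}I_0(ab)$ specialized to $a = b = \alpha$. Combining these yields the intercept $\tfrac{1}{2}\bigl(1 - e^{-\alpha^2}I_0(\alpha^2)\bigr)$ and reproduces the middle line of (\ref{eq_coro1}). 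The boundaries $\tilde{c}_1$ and $\tilde{c}_2$ are then obtained exactly as in (\ref{eq_c1})--(\ref{eq_c2}), by solving for $\beta$ when $y = 0$ and $y = 1$, respectively, in the simplified linear expression.

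For the step marked $(a)$, I would apply the large-argument asymptotic $I_0(t) \simeq e^{t}/\sqrt{2\pi t}$ from \cite[Eq. (9.7.1)]{abramowitz1999ia} at $t = \alpha^2$ — the same tool already used in the proof of Lemma \ref{Lemma1}. This gives $e^{-\alpha^2}I_0(\alpha^2) \simeq 1/\sqrt{2\pi\alpha^2}$, so the slope $\alpha e^{-\alpha^2}I_0(\alpha^2)$ reduces to $1/\sqrt{2\pi}$ and the intercept to $\tfrac{1}{2}\bigl(1 - 1/\sqrt{2\pi\alpha^2}\bigr)$. The boundaries $\breve{c}_1,\breve{c}_2$ are then read off by once again setting the simplified affine expression equal to $0$ and $1$.

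The main obstacle I expect is not any heavy computation but rather the careful justification of the two approximations used, namely (i) replacing $\beta_0$ by $\alpha$ and (ii) invoking the Bessel asymptotic, and checking the regime of $\alpha$ in which the resulting errors are small enough that the linearization remains useful. The first is a clean Taylor expansion with $O(1/\alpha)$ error; the second is textbook and sharp for large argument. The piecewise clamping to $[0,1]$ via $\tilde{c}_1,\tilde{c}_2$ and $\breve{c}_1,\breve{c}_2$ ensures that the approximation still respects the CDF range, so the only delicate verification is numerical: confirming that for the range of $\alpha$ that later arises in the PA application, the simplified forms in (\ref{eq_coro1}) remain close to the exact $1 - Q_1(\alpha,\beta)$.
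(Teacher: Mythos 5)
Your proposal is correct and follows essentially the same route as the paper: substitute $\beta_0\simeq\alpha$ for moderate/large $\alpha$, invoke the diagonal identity $Q_1(\alpha,\alpha)=\tfrac{1}{2}\bigl(1+e^{-\alpha^2}I_0(\alpha^2)\bigr)$ to get the intercept, solve the affine expression at $y=0$ and $y=1$ for $\tilde{c}_1,\tilde{c}_2$, and then apply $I_0(x)\simeq e^{x}/\sqrt{2\pi x}$ for step $(a)$. The extra detail you supply (the Taylor expansion of $\beta_0$ and the symmetry relation behind the diagonal identity) is consistent with, and slightly more explicit than, the paper's argument.
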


\begin{proof}
Using (\ref{eq_beta0}) for moderate/large values of $\alpha$, we have $\beta_0 \simeq \alpha$ and \cite[Eq.  (A-3-2)]{schwartz1995communication}
\begin{align}
    Q_1(\alpha,\alpha) = \frac{1}{2}\left(1+e^{-\alpha^2}I_0(\alpha^2)\right),
\end{align}
leading to
\begin{align}\label{eq_tildec1}
        \tilde{c}_1 = \frac{-\frac{1}{2}\left(1-e^{-\alpha^2}I_0(\alpha^2)\right)}{\alpha e^{-\alpha^2}I_0(\alpha^2)}+\alpha,
\end{align}
\begin{align}\label{eq_tildec2}
     \tilde{c}_2 = \frac{1-\frac{1}{2}\left(1-e^{-\alpha^2}I_0(\alpha^2)\right)}{\alpha e^{-\alpha^2}I_0(\alpha^2)}+\alpha,
\end{align}
and (\ref{eq_coro1}). Finally, $(a)$ is obtained by using the approximation $I_0(x) \simeq \frac{e^x}{\sqrt{2\pi x}}$ resulting in
\begin{align}\label{eq_dotc1}
        \breve{c}_1 = -\frac{\sqrt{2\pi}}{2}\left(1-\frac{1}{\sqrt{2\pi\alpha^2}}\right)+\alpha,
\end{align}
and
\begin{align}\label{eq_dotc2}
     \breve{c}_2 = \sqrt{2\pi}-\frac{\sqrt{2\pi}}{2}\left(1-\frac{1}{\sqrt{2\pi\alpha^2}}\right)+\alpha.
\end{align}
\end{proof}

\begin{corollary}\label{coro2}
For small values of $\alpha$, we have  $y(\alpha,\beta)\simeq\hat{ \mathcal{Z}}(\alpha,\beta)$ where
\begin{align}\label{eq_coro2}
\hat{ \mathcal{Z}}(\alpha,\beta)\simeq
\begin{cases}
0,   ~~~~~~~~~~~~~~~~~~~~~~~~~~~~~~~~~\mathrm{if}~\beta < \hat{c}_1  \\ 
\frac{\alpha+\sqrt{2}}{2}  e^{-\frac{\alpha^2+\left(\frac{\alpha+\sqrt{2}}{2}\right)^2}{2}}\times\\
~~~I_0\left(\frac{\alpha(\alpha+\sqrt{2})}{2}\right)(\beta-\frac{\alpha+\sqrt{2}}{2}) + \\
~~~1-Q_1\left(\alpha,\frac{\alpha+\sqrt{2}}{2}\right), 
~~~~~~~~\mathrm{if}~\hat{c}_1 \leq\beta\leq \hat{c}_2 \\
1, ~~~~~~~~~~~~~~~~~~~~~~~~~~~~~~~~~\mathrm{if}~ \beta> \hat{c}_2
\end{cases}
\end{align}
with $\hat{c}_1$ and $\hat{c}_2$ given in (\ref{eq_hatc1}) and (\ref{eq_hatc2}), respectively.
\end{corollary}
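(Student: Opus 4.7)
The plan is to obtain Corollary~\ref{coro2} as a direct specialization of Lemma~\ref{Lemma1} in the small-$\alpha$ regime, with the only simplification taking place inside the square root that defines $\beta_0$.

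First, I would start from the exact expression $\beta_0 = \frac{\alpha+\sqrt{\alpha^2+2}}{2}$ given in~(\ref{eq_alpha0}). For small $\alpha$, the term $\alpha^2$ inside the radical is negligible compared to $2$, so $\sqrt{\alpha^2+2} \simeq \sqrt{2}$, which yields
\begin{equation*}
\beta_0 \simeq \frac{\alpha+\sqrt{2}}{2}.
\end{equation*}
This is the only approximation step; no further expansion is performed, and in particular the value $Q_1(\alpha,\beta_0)$ and the Bessel factor $I_0(\alpha\beta_0)$ are kept exact.

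Next, I would substitute this simplified $\beta_0$ into the three pieces of~(\ref{eq_lema1}). The middle branch of $\hat{\mathcal{Z}}$ in~(\ref{eq_coro2}) is then read off directly: the slope $m = \beta_0 e^{-(\alpha^2+\beta_0^2)/2} I_0(\alpha\beta_0)$ and the offset $1-Q_1(\alpha,\beta_0)$ in Lemma~\ref{Lemma1} turn into the stated expression once $\beta_0$ is replaced by $\tfrac{\alpha+\sqrt{2}}{2}$. The boundary points $\hat{c}_1$ and $\hat{c}_2$ are obtained by the same substitution in~(\ref{eq_c1}) and~(\ref{eq_c2}), i.e.\ by intersecting the resulting line with $y=0$ and $y=1$, which produces the expressions that~(\ref{eq_hatc1}) and~(\ref{eq_hatc2}) name.

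The main thing to watch for, and what distinguishes this corollary from Corollary~\ref{coro1}, is that the asymptotic $I_0(x) \simeq e^x/\sqrt{2\pi x}$ is \emph{not} legitimate here: when $\alpha$ is small, the argument $\alpha\beta_0 \simeq \alpha(\alpha+\sqrt{2})/2$ is also small, and the large-argument expansion would be invalid. The correct behavior near $0$ is given instead by the series $I_0(x) = 1 + x^2/4 + \cdots$, which does not yield a clean closed form when inserted into~(\ref{eq_c1})--(\ref{eq_c2}). For this reason the tightest and cleanest statement is to stop the simplification at the $\beta_0$ replacement and retain $I_0$ and $Q_1$ in exact form, which is precisely what~(\ref{eq_coro2}) does; no deeper obstacle arises.
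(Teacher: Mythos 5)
Your proposal is correct and follows essentially the same route as the paper's own proof: for small $\alpha$ drop the $\alpha^2$ under the radical so that $\beta_0\simeq\frac{\alpha+\sqrt{2}}{2}$, then substitute this into the three branches of (\ref{eq_lema1}) and into (\ref{eq_c1})--(\ref{eq_c2}), keeping $I_0$ and $Q_1$ exact. One detail worth flagging: direct substitution of $\beta_0=\frac{\alpha+\sqrt{2}}{2}$ into (\ref{eq_c1})--(\ref{eq_c2}) yields a single factor $\frac{\alpha+\sqrt{2}}{2}$ in the denominator and an additive term $\frac{\alpha+\sqrt{2}}{2}$ rather than $\alpha$, so the printed forms of (\ref{eq_hatc1})--(\ref{eq_hatc2}) appear to contain minor typos relative to what your (and the paper's) argument actually produces.
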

\begin{proof}
Using (\ref{eq_beta0}) for small values of $\alpha$, we have $\beta_0\simeq \frac{\alpha+\sqrt{2}}{2}$, which leads to 
\begin{equation}\label{eq_hatc1}
    \hat{c}_1 = \frac{-1+Q_1\left(\alpha,\frac{\alpha+\sqrt{2}}{2}\right)}{\left(\frac{\alpha+\sqrt{2}}{2} \right)^2 e^{-\frac{\alpha^2+\left(\frac{\alpha+\sqrt{2}}{2}\right)^2}{2}} I_0\left(\frac{\alpha(\alpha+\sqrt{2})}{2}\right)}+\alpha,
\end{equation}
and
\begin{equation}\label{eq_hatc2}
    \hat{c}_2 = \frac{Q_1\left(\alpha,\frac{\alpha+\sqrt{2}}{2}\right)}{\left(\frac{\alpha+\sqrt{2}}{2} \right)^2 e^{-\frac{\alpha^2+\left(\frac{\alpha+\sqrt{2}}{2}\right)^2}{2}} I_0\left(\frac{\alpha(\alpha+\sqrt{2})}{2}\right)}+\alpha,
\end{equation}
and simplifies (\ref{eq_lema1}) to (\ref{eq_coro2}).
\end{proof}


To illustrate these semi-linear approximations, Fig. \ref{fig_CDFilu} shows the CDF $y(\alpha,\beta)= 1-Q_1(\alpha,\beta)$ for both small and large values of $\alpha$, and compares the exact CDF with the approximation schemes of Lemma \ref{Lemma1} and Corollaries \ref{coro1}-\ref{coro2}.  Note that, the ranges of $\alpha$ and $\beta$ depend on applications, and in communication systems experiencing a Rician-type channel, $\alpha$ is typically small or moderate, e.g., less than 5 \cite{Bocus2013CLapproximation}, because it is a function of channel/channel gain. 

From Fig. \ref{fig_CDFilu}, we can observe that Lemma \ref{Lemma1} is tight for  different ranges of $\alpha$  and moderate values of $\beta$. Moreover, the tightness is improved as $\alpha$ decreases. Also,  Corollaries \ref{coro1}-\ref{coro2} provide good approximations for large and small values of $\alpha$, respectively. Then, the proposed approximations are not tight at the tails of the CDF.  However, as observed in \cite{Bocus2013CLapproximation,Fu2011GLOBECOMexponential,Makki2013TCfeedback,Makki2011Eurasipcapacity,Makki2018WCLwireless,Makki2016TVTperformance,Simon2003TWCsome,Suraweera2010TVTcapacity,Ma2000JSACunified,Digham2007TCenergy,Cao2016CLsolutions,sofotasios2015solutions,Azari2018TCultra,Alam2014INFOCOMWrobust,Gao2018IAadmm,Shen2018TVToutage,Song2017JLTimpact,Tang2019IAan}  and in the following, in different applications,  Marcum $Q$-function is normally combined with other functions which tend to zero at the tails of the CDF. In such cases, the inaccuracy of the approximation at the tails does not affect the tightness of the final result.

\begin{figure}
\centering
  \includegraphics[width=1.0\columnwidth]{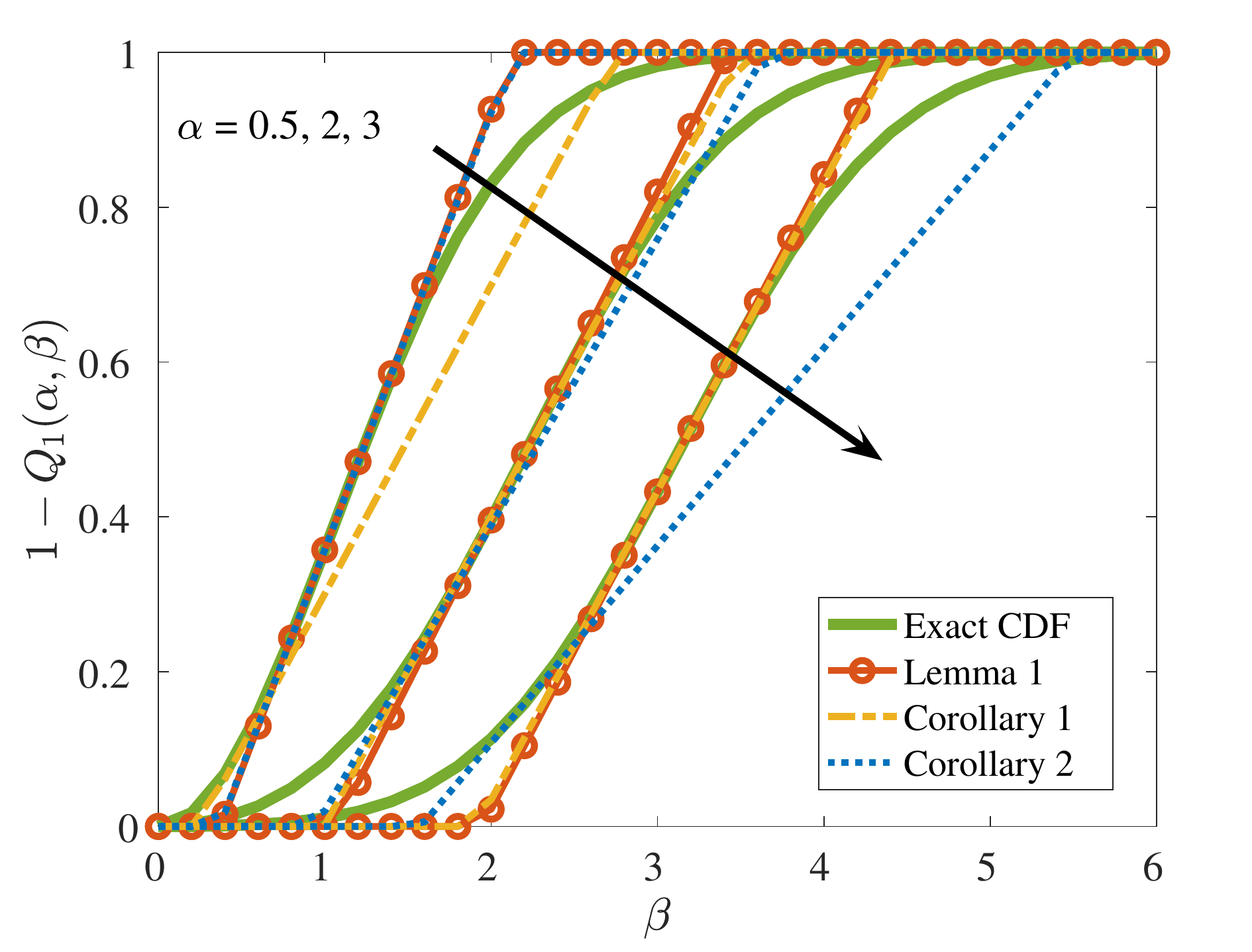}\\
\caption{Illustration of the semi-linear approximation with Lemma \ref{Lemma1}, and Corollaries \ref{coro1}-\ref{coro2}. For each value of $\alpha\in[0.5,2,3]$, the approximated results obtained by Lemma \ref{Lemma1} and Corollaries \ref{coro1}-\ref{coro2} are compared with the exact value for a broad range of $\beta$.}
\label{fig_CDFilu}
\end{figure}

As an example, we first consider a general integral in the form of
\begin{align}\label{eq_integral}
G(\alpha,\rho)=\int_\rho^\infty{e^{-nx} x^m \left(1-Q_1(\alpha,x)\right)\text{d}x} ~~\forall n,m,\alpha,\rho>0.
\end{align}
Such an integral has been observed in various applications, e.g., in bit-error-probability evaluation of a Rayleigh fading channel \cite[eq. (1) (13)]{Simon2003TWCsome}, in energy detection of unknown signals over various multipath fading channels \cite[eq. (2)]{Cao2016CLsolutions}, in capacity analysis with channel inversion and fixed rate over correlated Nakagami fading \cite[eq. (1)]{sofotasios2015solutions}, in performance evaluation of  incoherent receivers in radar systems \cite[eq. (3)]{Cui2012ELtwo}, and in error probability analysis of diversity receivers \cite[eq. (1)]{Gaur2003TVTsome}. However, depending on the values of $n, m$ and $\rho$, (\ref{eq_integral}) may have no closed-form expression. Using Corollary \ref{coro1}, $G(\alpha,\rho)$ can be approximated in closed-form as presented in Lemma \ref{Lemma2}.

\begin{lem}\label{Lemma2}
The integral (\ref{eq_integral}) is approximately given by
\begin{align}
G(\alpha,\rho)\simeq
\begin{cases}
\Gamma(m+1,n\rho)n^{-m-1}, ~~~~~~~~~~~~~~~~~~\mathrm{if}~  \rho \geq \breve{c}_2  \\ 
\Gamma(m+1,n\breve{c}_2)n^{-m-1} + \\
~~\left(-\frac{\alpha}{\sqrt{2\pi}}+\frac{1}{2}\left(1-\frac{1}{\sqrt{2\pi\alpha^2}}\right)\right)\times n^{-m-1}\times\\
~~\left(\Gamma(m+1,n\max(\breve{c}_1,\rho))-\Gamma(m+1,n\breve{c}_2)\right)+\\
~~\left(\Gamma(m+2,n\max(\breve{c}_1,\rho))-\Gamma(m+2,n\breve{c}_2)\right)\times\\
~~\frac{n^{-m-2}}{\sqrt{2\pi}},
~~~~~~~~~~~~~~~~~~~~~~~~~~~~~~~~\mathrm{if}~  \rho<\breve{c}_2,
\end{cases}
\end{align}
where $\Gamma(s,x) = \int_{x}^{\infty} t^{s-1}e^{-t} \mathrm{d}t$ is the upper incomplete gamma function \cite[Eq. 6.5.1]{abramowitz1999ia}.
\end{lem}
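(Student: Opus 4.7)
The plan is to substitute the semi-linear approximation $\tilde{\mathcal{Z}}(\alpha,x)$ from Corollary \ref{coro1} (specifically the refined form using $I_0(x)\simeq e^x/\sqrt{2\pi x}$) in place of $1-Q_1(\alpha,x)$ inside the integrand of (\ref{eq_integral}), and then evaluate the resulting elementary integrals using the standard identity
\begin{align}
\int_a^b e^{-nx} x^k\,\mathrm{d}x = \frac{\Gamma(k+1,na)-\Gamma(k+1,nb)}{n^{k+1}},
\end{align}
which follows from the substitution $t=nx$ in the definition of the upper incomplete gamma function.

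First I would split the analysis into two cases according to the position of $\rho$ relative to $\breve{c}_2$. If $\rho\ge \breve{c}_2$, then on the entire integration domain $[\rho,\infty)$ we have $\tilde{\mathcal{Z}}(\alpha,x)\simeq 1$, so $G(\alpha,\rho)$ collapses to $\int_\rho^\infty e^{-nx}x^m\,\mathrm{d}x=\Gamma(m+1,n\rho)n^{-m-1}$, matching the first branch of the lemma. If instead $\rho<\breve{c}_2$, I would decompose the integral as
\begin{align}
G(\alpha,\rho)\simeq \int_{\max(\breve{c}_1,\rho)}^{\breve{c}_2}\! e^{-nx}x^m\,\tilde{\mathcal{Z}}(\alpha,x)\,\mathrm{d}x+\int_{\breve{c}_2}^{\infty}\! e^{-nx}x^m\,\mathrm{d}x,
\end{align}
where the use of $\max(\breve{c}_1,\rho)$ as the lower limit unifies the subcases $\rho<\breve{c}_1$ (in which the integrand vanishes on $[\rho,\breve{c}_1]$ because $\tilde{\mathcal{Z}}(\alpha,x)\simeq 0$ there) and $\breve{c}_1\le\rho<\breve{c}_2$ into a single expression.

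Next I would plug the linear branch $\tilde{\mathcal{Z}}(\alpha,x)\simeq \tfrac{1}{\sqrt{2\pi}}(x-\alpha)+\tfrac{1}{2}\bigl(1-\tfrac{1}{\sqrt{2\pi\alpha^2}}\bigr)$ into the first integral and separate it into a constant part proportional to $\int x^m e^{-nx}\mathrm{d}x$ (yielding a difference of $\Gamma(m+1,\cdot)$ terms with prefactor $n^{-m-1}$) and a linear-in-$x$ part proportional to $\tfrac{1}{\sqrt{2\pi}}\int x^{m+1}e^{-nx}\mathrm{d}x$ (yielding a difference of $\Gamma(m+2,\cdot)$ terms with prefactor $n^{-m-2}$). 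Collecting the three resulting pieces (the tail $[\breve{c}_2,\infty)$, the constant part and the linear part over $[\max(\breve{c}_1,\rho),\breve{c}_2]$) and grouping the constants $-\alpha/\sqrt{2\pi}$ and $\tfrac{1}{2}(1-1/\sqrt{2\pi\alpha^2})$ into one bracket reproduces exactly the second branch of the lemma statement.

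The calculation itself is essentially bookkeeping; the only subtle point, and the one I would be most careful with, is the unified handling of the lower integration limit via $\max(\breve{c}_1,\rho)$ so that both subcases $\rho<\breve{c}_1$ and $\breve{c}_1\le\rho<\breve{c}_2$ yield the same compact formula, together with verifying that the tail contribution $\Gamma(m+1,n\breve{c}_2)n^{-m-1}$ arising from the region $[\breve{c}_2,\infty)$ is consistently kept separate from the $\Gamma(m+1,n\breve{c}_2)$ term produced by the upper endpoint of the linear-branch integral. Once the algebra of gamma-function differences is arranged correctly, no further approximation is needed beyond the one already inherited from Corollary \ref{coro1}.
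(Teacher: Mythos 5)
Your proposal is correct and follows essentially the same route as the paper's own proof: substitute the linear branch of Corollary \ref{coro1} over $[\max(\breve{c}_1,\rho),\breve{c}_2]$ and the constant $1$ over $[\breve{c}_2,\infty)$, then evaluate each piece via $\int_a^b e^{-nx}x^k\,\mathrm{d}x$ in terms of upper incomplete gamma functions. The unified lower limit $\max(\breve{c}_1,\rho)$ and the separation into constant and linear-in-$x$ contributions are exactly how the paper handles the $\rho<\breve{c}_2$ case.
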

\begin{proof}
See Appendix \ref{proof_Lemma2}. 
\end{proof}

As a second integration example of interest, consider
\begin{align}\label{eq_integralT}
    T(\alpha,m,a,\theta_1,\theta_2) = \int_{\theta_1}^{\theta_2} e^{-mx}\log(1+ax)Q_1(\alpha,x)\text{d}x \nonumber\\\forall m>0,a,\alpha,
\end{align}
with $\theta_2>\theta_1\geq0$, which does not have a closed-form expression for different values of $m, a, \alpha$. This type of integral is interesting as it could be used to analyze the expected performance of outage-limited systems, e.g,  the considered integrals in the shape of \cite[eq. (1) (13)]{Simon2003TWCsome}, \cite[eq. (2)]{Cao2016CLsolutions}, \cite[eq. (3)]{Cui2012ELtwo}, and \cite[eq. (1)]{Gaur2003TVTsome}, applied in the analysis of the outage-limited throughput, i.e., when the outage-limited throughput $\log(1+ax)Q_1(\alpha,x)$   is averaged over fading statistics \cite[p. 2631]{Biglieri1998TITfading}\cite[Theorem 6]{Verdu1994TITgeneral}\cite[Eq. (9)]{Makki2014TCperformance}. Then, using Lemma \ref{Lemma1}, (\ref{eq_integralT}) can be approximated in closed-form as follows.

\begin{lem}\label{Lemma3}
The integral (\ref{eq_integralT}) is approximately given by
\begin{align}
  T(\alpha,m,a,\theta_1,\theta_2)\simeq~~~~~~~~~~~~~~~~~~~~~~~~~~~~~~~~~~~~~~~~~~~~~~~~~~\nonumber\\
\begin{cases}
\mathcal{F}_1(\theta_2)-\mathcal{F}_1(\theta_1), ~~~~~~~~~~~~~~~~~~~~~~~~~\mathrm{if}~  0\leq\theta_1<\theta_2 < c_1  \\ 
\mathcal{F}_1(c_1)-\mathcal{F}_1(\theta_1)+\mathcal{F}_2(\max(c_2,\theta_2))-\mathcal{F}_2(c_1),  \\~~~~~~~~~~~~~~~~~~~~~~~~~~~~~~~~~~~~~~~~~~~~~~\mathrm{if}~ \theta_1<c_1, \theta_2\geq c_1\\
 \mathcal{F}_2(\max(c_2,\theta_2))-\mathcal{F}_2(c_1), \\~~~~~~~~~~~~~~~~~~~~~~~~~~~~~~~~~~~~~~~~~~~~~~\mathrm{if}~ \theta_1>c_1\\
0, ~~~~~~~~~~~~~~~~~~~~~~~~~~~~~~~~~~~~~~~~~~~~\mathrm{if}~ \theta_1 > c_2,
\end{cases}  
\end{align}
where 
$c_1$ and $c_2$ are given by (\ref{eq_c1}) and (\ref{eq_c2}), respectively. Moreover,
\begin{align}\label{eq_F1}
    \mathcal{F}_1(x) \doteq \frac{1}{m}\left(-e^{\frac{m}{a}}\operatorname{E_1}\left(mx+\frac{m}{a}\right)-e^{-mx}\log(ax+1)\right),
\end{align}
and
\begin{align}\label{eq_F2}
    \mathcal{F}_2(x) \doteq &~ \mathrm{e}^{-mx}\Bigg(\left(mn_2-an_2-amn_1\right)\mathrm{e}^\frac{m\left(ax+1\right)}{a}\nonumber\\&~~ \operatorname{E_1}\left(\frac{m\left(ax+1\right)}{a}\right)-
    a\left(mn_2x+n_2+mn_1\right)\nonumber\\&~~\log\left(ax+1\right)-an_2\Bigg),
\end{align}
with
\begin{align}\label{eq_n1}
    n_1 = 1+\beta_0 e^{-\frac{1}{2}\left(\alpha^2+\left(\beta_0\right)^2\right)}
 I_0\left(\alpha\beta_0\right)\beta_0-
 1+Q_1\left(\alpha,\beta_0\right),
\end{align}
and
\begin{align}\label{eq_n2}
    n_2 = -\beta_0 e^{-\frac{1}{2}\left(\alpha^2+\left(\beta_0\right)^2\right)}I_0\left(\alpha\beta_0\right).
\end{align}
In (\ref{eq_F1}) and (\ref{eq_F2}), $\operatorname{E_1}(x) = \int_x^{\infty} \frac{e^{-t}}{t} \mathrm{d}t$ is the Exponential Integral function \cite[p. 228, (5.1.1)]{abramowitz1999ia}.
\end{lem}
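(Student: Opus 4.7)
The plan is to invoke Lemma~\ref{Lemma1} inside the integrand of (\ref{eq_integralT}) and exploit the piecewise-linear structure of the resulting approximation. Writing $Q_1(\alpha,x)\simeq 1-\mathcal{Z}(\alpha,x)$, the factor $Q_1(\alpha,x)$ becomes (i) identically $1$ on $[0,c_1)$, (ii) the affine function $n_1+n_2 x$ on $[c_1,c_2]$, where matching the constant and $x$-coefficients of $1-\mathcal{Z}(\alpha,x)$ produces exactly (\ref{eq_n1})--(\ref{eq_n2}), and (iii) identically $0$ on $(c_2,\infty)$. Thus $T(\alpha,m,a,\theta_1,\theta_2)$ splits into at most two non-trivial pieces depending on how $[\theta_1,\theta_2]$ overlaps the strips $[0,c_1)$, $[c_1,c_2]$ and $(c_2,\infty)$, and the four-case statement of the lemma is a direct bookkeeping of where the endpoints $\theta_1,\theta_2$ fall with respect to $c_1$ and $c_2$.

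Once the approximation is inserted, only two antiderivatives are needed. The first, $\int e^{-mx}\log(1+ax)\,\mathrm{d}x$, is handled by a single integration by parts with $u=\log(1+ax)$ and $\mathrm{d}v=e^{-mx}\mathrm{d}x$; the residual integral $\int \frac{e^{-mx}}{1+ax}\mathrm{d}x$ is reduced to the exponential integral via the substitution $t=m(ax+1)/a$ together with the identity $\int \frac{e^{-t}}{t}\mathrm{d}t=-\operatorname{E_1}(t)+C$. This reproduces $\mathcal{F}_1$ in (\ref{eq_F1}) up to a constant of integration that is killed by the definite evaluation. The second primitive, $\int e^{-mx}(n_1+n_2 x)\log(1+ax)\,\mathrm{d}x$, is obtained by the same device: integration by parts with $u=\log(1+ax)$ and $\mathrm{d}v=(n_1+n_2 x)e^{-mx}\mathrm{d}x$ yields a polynomial-times-exponential boundary term plus a residual $\int \frac{P(x)e^{-mx}}{1+ax}\mathrm{d}x$ with an affine $P$; polynomial division splits the latter into a pure-exponential contribution and another $\operatorname{E_1}$ contribution through the same substitution. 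Collecting the three terms reproduces $\mathcal{F}_2$ in (\ref{eq_F2}).

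To finish, I evaluate $\mathcal{F}_1$ on the portion of $[\theta_1,\theta_2]$ lying in $[0,c_1)$ and $\mathcal{F}_2$ on the portion lying in $[c_1,c_2]$, while the tail $(c_2,\infty)$ contributes nothing. The four branches of the lemma correspond to the qualitatively different overlaps, and the $\max(c_2,\theta_2)$ appearing in the second and third cases simply caps the linear region at its upper endpoint when $\theta_2$ exceeds $c_2$. The hard part will be purely computational: carrying the constants $n_1,n_2$ through the second integration by parts and verifying that the coefficients of $\operatorname{E_1}(\cdot)$, $\log(1+ax)$, and the bare exponential all collapse into the compact form (\ref{eq_F2}). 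Everything else---the case split, the two integrations by parts, and the appearance of $\operatorname{E_1}$---is essentially forced by the shape of the integrand and the piecewise-linear approximation supplied by Lemma~\ref{Lemma1}.
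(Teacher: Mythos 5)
Your proposal matches the paper's proof essentially step for step: substitute the piecewise-linear approximation of Lemma~\ref{Lemma1} for $Q_1(\alpha,x)$ (giving $1$ below $c_1$, the affine $n_1+n_2x$ on $[c_1,c_2]$, and $0$ above $c_2$), split the integral according to where $\theta_1,\theta_2$ fall relative to $c_1,c_2$, and obtain $\mathcal{F}_1,\mathcal{F}_2$ by integration by parts followed by reduction of the residual integrals to $\operatorname{E_1}$. Your reading of $\max(c_2,\theta_2)$ as a cap at $c_2$ (i.e., effectively $\min(c_2,\theta_2)$) is the correct one --- the paper's appendix indeed integrates up to $\min(c_2,\theta_2)$, so the $\max$ in the lemma statement is a typo that you have implicitly corrected.
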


\begin{proof}
See Appendix \ref{proof_Lemma3}.
\end{proof}

Finally, setting $m = 0$ in (\ref{eq_integralT}), i.e.,
\begin{align}\label{eq_integralTs}
     T(\alpha,0,a,\theta_1,\theta_2) = \int_{\theta_1}^{\theta_2} \log(1+ax)Q_1(\alpha,x)\text{d}x,  \forall a,\alpha,
\end{align}
one can follow the same procedure in (\ref{eq_integralT}) to approximate (\ref{eq_integralTs}) as
\begin{align}\label{eq_integralTss}
  T(\alpha,0,a,\theta_1,\theta_2)\simeq~~~~~~~~~~~~~~~~~~~~~~~~~~~~~~~~~~~~~~~~~~~~~~~~~~\nonumber\\
\begin{cases}
\mathcal{F}_3(\theta_2)-\mathcal{F}_3(\theta_1), ~~~~~~~~~~~~~~~~~~~~~~~~~\mathrm{if}~  0\leq\theta_1<\theta_2 < c_1  \\ 
\mathcal{F}_3(c_1)-\mathcal{F}_3(\theta_1)+\mathcal{F}_4(\max(c_2,\theta_2))-\mathcal{F}_4(c_1),  \\~~~~~~~~~~~~~~~~~~~~~~~~~~~~~~~~~~~~~~~~~~~~~~\mathrm{if}~ \theta_1<c_1, \theta_2\geq c_1\\
 \mathcal{F}_4(\max(c_2,\theta_2))-\mathcal{F}_4(c_1), \\~~~~~~~~~~~~~~~~~~~~~~~~~~~~~~~~~~~~~~~~~~~~~~\mathrm{if}~ \theta_1>c_1\\
0, ~~~~~~~~~~~~~~~~~~~~~~~~~~~~~~~~~~~~~~~~~~~\mathrm{if}~ \theta_1 > c_2,
\end{cases}  
\end{align}
with $c_1$ and $c_2$ given by (\ref{eq_c1}) and (\ref{eq_c2}), respectively. Also,
\begin{align}
    \mathcal{F}_3 = \frac{(ax+1)(\log(ax+1)-1)}{a},
\end{align}
and
\begin{align}
    \mathcal{F}_4 = \frac{n_2\left((2a^2x^2-2)\log(ax+1)-a^2x^2+2ax\right)}{4a^2}+\nonumber\\\frac{n_1(ax+1)(\log(ax+1)-1)}{a},
\end{align}
where $n_1$ and $n_2$ are given by (\ref{eq_n1}) and (\ref{eq_n2}), respectively.

\begin{figure}
\centering
  \includegraphics[width=1.0\columnwidth]{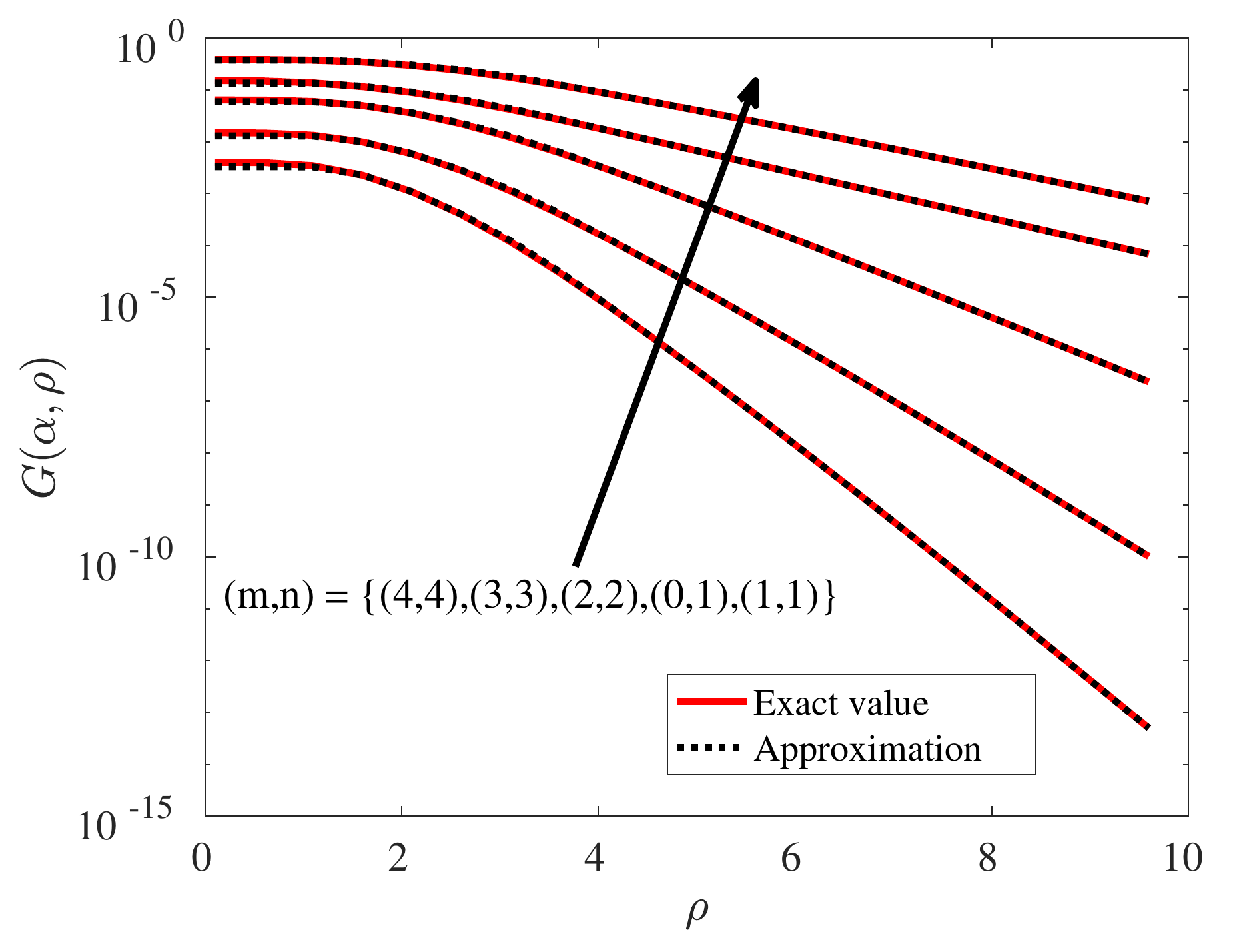}\\
\caption{The  integral (\ref{eq_integral}) involving Marcum $Q$-function. Solid lines are exact values while crosses are the results obtained from Lemma \ref{Lemma2}, $\alpha = 2$,  $(m,n) = \{(4,4), (3,3), (2,2), (0,1), (1,1)\}$.}\label{fig_integrald}
\end{figure}

\begin{figure}
\centering
  \includegraphics[width=1.0\columnwidth]{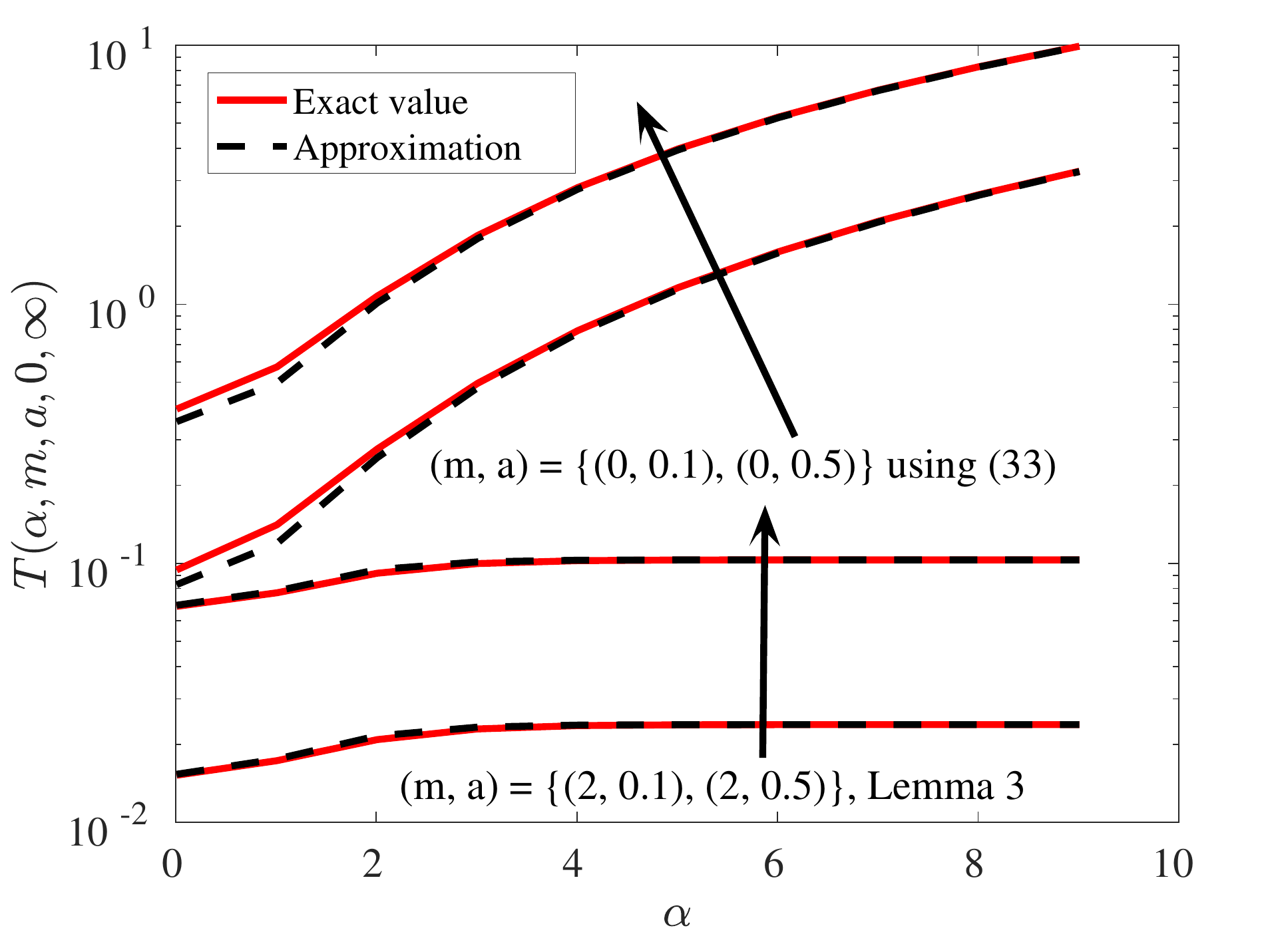}\\
\caption{The integral (\ref{eq_integralT}) involving the Marcum $Q$-function. Solid lines are exact values while crosses are the results obtained from Lemma \ref{Lemma3} and (\ref{eq_integralTss}). $\theta_1 = 0, \theta_2 = \infty$. }\label{fig_t2}
\end{figure}

In Figs. \ref{fig_integrald} and \ref{fig_t2}, we evaluate the tightness of the approximations in Lemmas \ref{Lemma2}, \ref{Lemma3} and (\ref{eq_integralTss}), for different values of $m$, $n$, $\rho$, $a$ and $\alpha$. From the figures, it can be observed that the approximation schemes of   Lemmas \ref{Lemma2}-\ref{Lemma3} and (\ref{eq_integralTss}) are very tight for different parameter settings, while our proposed semi-linear approximation makes it possible to represent the integrals in closed-form.  In this way, although the approximation (\ref{eq_lema1}) is not tight at the tails of the CDF, it gives tight approximation results when it appears in different integrals (Lemmas \ref{Lemma2}-\ref{Lemma3}) with the Marcum-$Q$ function  combined with other functions tending to zero at the tails of the function. Also, as we show in Section \ref{sec:application}, the semi-linear approximation scheme is efficient in optimization problems consisting of  Marcum $Q$-function. Finally, to tightly approximate the Marcum $Q$-function at the tails, which is the range of interest in, e.g., error probability analysis, one can use the approximation schemes of \cite{Simon2000TCexponential,annamalai2001WCMCcauchy}.

\section{Applications in PA Systems}\label{sec:application}
In Section \ref{sec:appro}, we showed how the proposed approximation scheme enables us to derive closed-form expressions for a broad range of integrals, as required in various expectation-based calculations, e.g., \cite{Simon2003TWCsome,Cao2016CLsolutions,sofotasios2015solutions,Cui2012ELtwo,Gaur2003TVTsome,Simon2000TCexponential,6911973}. On the other hand,  Marcum $Q$-function may also  appear in  optimization problems, e.g., \cite[eq. (8)]{Azari2018TCultra},\cite[eq. (9)]{Alam2014INFOCOMWrobust}, \cite[eq. (10)]{Gao2018IAadmm}, \cite[eq. (10)]{Shen2018TVToutage}, \cite[eq. (15)]{Song2017JLTimpact}, \cite[eq. (22)]{Tang2019IAan}.  For this reason, in this section, we provide an example of using our proposed semi-linear approximation in an optimization problem for PA systems.

\subsection{Problem Formulation}
Vehicle communication is one of the most important use cases in 5G.  Here, the main focus is to provide efficient and reliable connections to cars and public transports, e.g., busses and trains. CSIT plays an important role in achieving these goals, since the data transmission efficiency can be improved by updating the transmission parameters relative to the instantaneous channel state.  However, the typical CSIT acquisition systems, which are mostly designed for (semi)static channels, may not work well for high-speed vehicles. This is because, depending on the vehicle speed, the position of the antennas may change quickly and the channel information becomes inaccurate.  To overcome this issue, \cite{Sternad2012WCNCWusing,DT2015ITSMmaking,BJ2017PIMRCpredictor,phan2018WSAadaptive,Jamaly2014EuCAPanalysis, BJ2017ICCWusing} propose the PA setup as shown in Fig. \ref{system}. With a PA setup, which is of interest in cellular vehicle-to-everything (C-V2X) communications \cite{Sternad2012WCNCWusing} as well as integrated access and backhauling \cite{Teyeb2019VTCintegrated,madaapatha2020iab}, two antennas are deployed on the top of the vehicle. The first antenna,  the PA,  estimates the channel and sends feedback to the BS at time $t$. Then, the BS uses the CSIT provided by the PA to communicate with a second antenna, which we refer to as RA, at time $t+\delta$, where $\delta$ is the processing time at the BS. In this way, the BS can use the CSIT acquired from the PA and perform various CSIT-based transmission schemes, e.g.,  \cite{Sternad2012WCNCWusing,BJ2017ICCWusing}.

\begin{figure}
\centering
  \includegraphics[width=1.0\columnwidth]{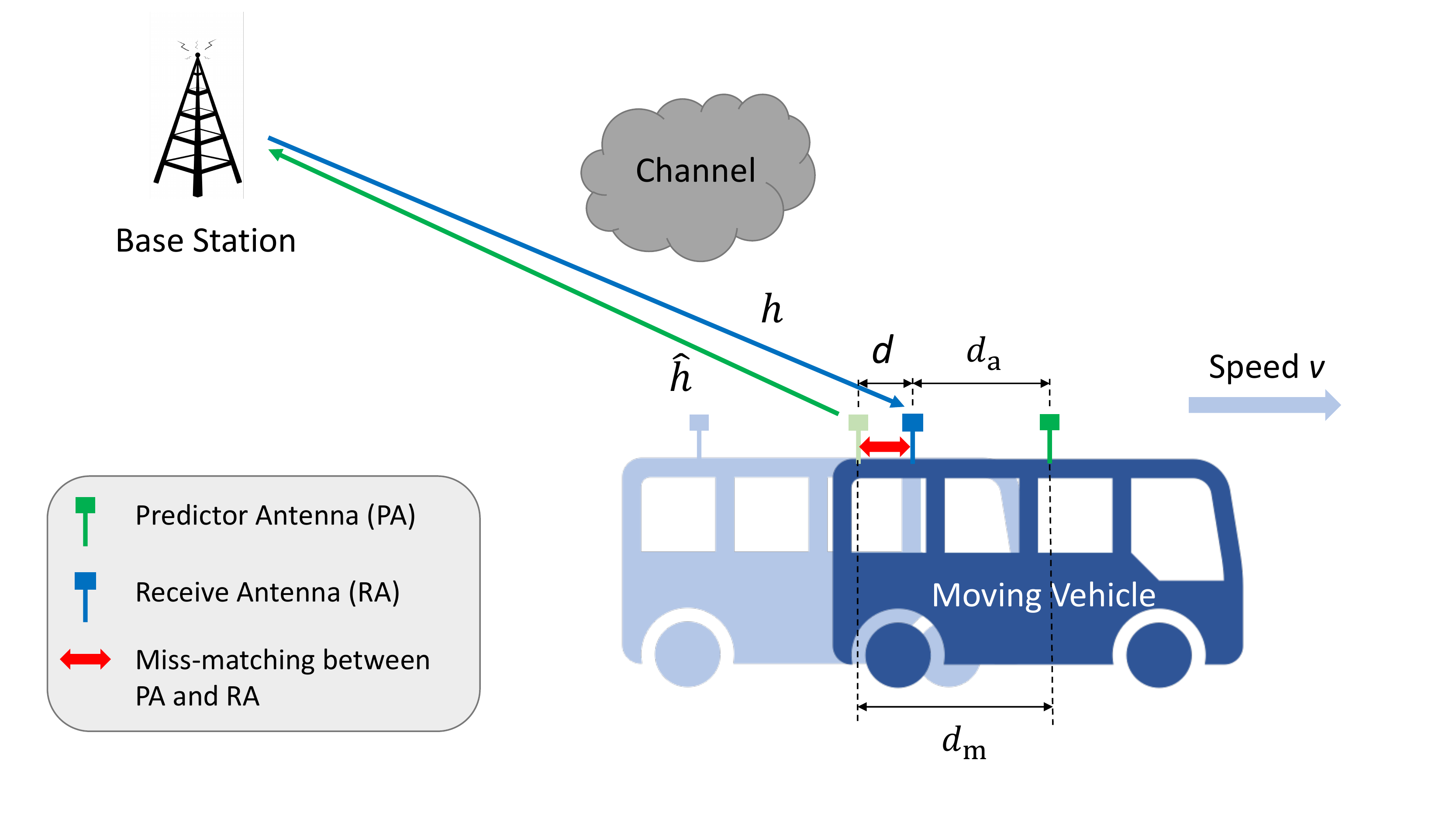}\\
\caption{PA system with the mismatch problem. Here, $\hat h$ is the channel between the BS and the PA while $h$ refers to the channel in the BS-RA link. The vehicle is moving with speed $v$ and the antenna separation is $d_\text{a}$. The red arrow indicates the spatial mismatch, i.e., when the RA does not reach  the same  point as the PA when sending pilots. Also, $d_\text{m}$ is the moving distance of the vehicle which is affected by the processing delay $\delta$ of the BS. }\label{system}
\end{figure}

We assume that the vehicle moves through a stationary electromagnetic standing wave pattern, as this assumption has been experimentally validated to be essentially correct in \cite{Jamaly2014EuCAP}. Thus, if the RA reaches exactly the same position as the position of the PA when sending the pilots, it will experience the same channel and the CSIT will be perfect\footnote{In our previous work \cite[Sec. III.A]{Guo2019WCLrate} we have studied the effect of non-stationary environment on the system performance.}. However, if the RA does not reach the same spatial point as the PA, due to, e.g., the BS processing delay is not equal to the time that we need until the RA reaches the same point as the PA, the RA may receive the data in a place different from the one in which the PA was sending the pilots. Such spatial mismatch may lead to CSIT inaccuracy, which will affect the system performance considerably. Thus, we need a method to compensate for it, and here we analyze optimal rate adaptation.

Considering downlink transmission in the BS-RA link, the received signal is given by
\begin{align}\label{eq_Y}
{{y}} = \sqrt{P}hx + w.
\end{align}
Here, $P$ represents the transmit power, $x$ is the input message with unit variance, and $h$ is the fading coefficient between the BS and the RA. Also, $w \sim \mathcal{CN}(0,1)$ denotes the independent and identically distributed (IID) complex Gaussian noise added at the receiver.

We denote the channel coefficient of the PA-BS uplink as $\hat{h}$. Also, we define $d$ as the effective distance between  the place where the PA estimates the channel at time $t$, and the place where the RA reaches at time $t+\delta$. As can be seen in Fig. \ref{system}, $d$ can be calculated as
\begin{align}\label{eq_d}
    d = |d_\text{a} - d_\text{m} | = |d_\text{a} - v\delta|,
\end{align}
where $d_\text{m}$ is the moving distance of the vehicle during time interval $\delta$, and $v$ is the velocity of the vehicle. Also, $d_\text{a}$ is the antenna separation between the PA and the RA. In conjunction to (\ref{eq_d}), here, we assume $d$ can be calculated by the BS. 

Using the classical Jake's correlation model \cite[p. 2642]{Shin2003TITcapacity} and  assuming a semi-static propagation environment, i.e., assuming that the coherence time of the propagation environment is larger than $\delta$,  the channel coefficient of the BS-RA downlink can be modeled as 
\begin{align}\label{eq_H}
    h = \sqrt{1-\sigma^2} \hat{h} + \sigma q.
\end{align}
Here, $q \sim \mathcal{CN}(0,1)$  is independent of the known channel value $\hat{h}\sim \mathcal{CN}(0,1)$, and $\sigma$ is a function of the effective distance $d$ as \cite[p. 2642]{Shin2003TITcapacity}
\begin{align}
    \sigma = \frac{\frac{\phi_2^2-\phi_1^2}{\phi_1}}{\sqrt{ \left(\frac{\phi_2}{\phi_1}\right)^2 + \left(\frac{\phi_2^2-\phi_1^2}{\phi_1}\right)^2 }} = \frac{\phi_2^2-\phi_1^2}{\sqrt{ \left(\phi_2\right)^2 + \left(\phi_2^2-\phi_1^2\right)^2 }} .
\end{align}
Here, $\phi_1 = \bm{\Phi}_{1,1}^{1/2} $ and $\phi_2 = \bm{\Phi}_{1,2}^{1/2} $, where $\bm{\Phi}$ is  from Jake's model \cite[p. 2642]{Shin2003TITcapacity}
\begin{align}\label{eq_tildeH}
     \bigl[ \begin{smallmatrix}
  \hat{h}\\h
\end{smallmatrix} \bigr]= \bm{\Phi}^{1/2} \bm{h}_{\varepsilon}.
\end{align}
Note that, the spatial mismatch phenomenon (\ref{eq_H}) has been experimentally verified in, e.g., \cite{Jamaly2014EuCAP} for PA setups. Also, one can follow the same method as in \cite{Guo2019WCLrate} to extend the model to the cases with temporally-correlated channels. Moreover, in (\ref{eq_tildeH}), $\bm{h}_{\varepsilon}$ has independent circularly-symmetric zero-mean complex Gaussian entries with unit variance, and $\bm{\Phi}$ is the channel correlation matrix with the $(i,j)$-th entry given by
\begin{align}\label{eq_phi}
    \Phi_{i,j} = J_0\left((i-j)\cdot2\pi d/ \lambda\right) \forall i,j.
\end{align}
Here, $J_n(x) = (\frac{x}{2})^n \sum_{i=0}^{\infty}\frac{(\frac{x}{2})^{2i}(-1)^{i} }{i!\Gamma(n+i+1)}$ represents the $n$-th order Bessel function of the first kind. Moreover, $\lambda$ denotes the carrier wavelength, i.e., $\lambda = c/f_\text{c}$ where $c$ is the speed of light and $f_\text{c}$ is the carrier frequency.

From (\ref{eq_H}), for a given $\hat{h}$ and $\sigma \neq 0$, $|h|$ follows a Rician distribution, i.e., the  probability density function (PDF) of $|h|$ is given by 
\begin{align}
    f_{|h|\big|\hat{g}}(x) = \frac{2x}{\sigma^2}e^{-\frac{x^2+\hat{g}}{\sigma^2}}I_0\left(\frac{2x\sqrt{\hat{g}}}{\sigma^2}\right),
\end{align}
where $\hat{g} = (1-\sigma^2)|\hat{h}|^2$. Let us define the channel gain between BS-RA as $ g = |{h}|^2$. Then, the PDF of $g|\hat{g}$ is given by

\begin{align}\label{eq_pdf}
    f_{g|\hat{g}}(x) = \frac{1}{\sigma^2}e^{-\frac{x+\hat{g}}{\sigma^2}}I_0\left(\frac{2\sqrt{x\hat{g}}}{\sigma^2}\right),
\end{align}
which is non-central Chi-squared distributed with the CDF containing the first-order Marcum $Q$-function as
\begin{align}\label{eq_cdf}
    F_{g|\hat{g}}(x) = 1 - Q_1\left( \sqrt{\frac{2\hat{g}}{\sigma^2}}, \sqrt{\frac{2x}{\sigma^2}}  \right).
\end{align}

\subsection{Analytical Results on Rate Adaptation Using the Semi-Linear Approximation of the First-order Marcum Q-Function}\label{In Section III.C,}
We assume that $d_\text{a}$, $\delta $ and $\hat{g}$ are known by the BS. It can be seen from (\ref{eq_pdf}) that $f_{g|\hat{g}}(x)$ is a function of $v$. For a given $v$, the distribution of $g$ is known by the BS, and a rate adaption scheme can be performed to improve the system performance.

For a given instantaneous value of $\hat g$, the data is transmitted with instantaneous rate $R_{|\hat{g}}$ nats-per-channel-use (npcu). If the instantaneous channel gain realization supports the transmitted data rate $R_{|\hat{g}}$, i.e., $\log(1+gP)\ge R_{|\hat{g}}$, the data can be successfully decoded. Otherwise,  outage occurs. Hence, the outage probability in each time slot is
\begin{align}
    \Pr(\text{outage}|\hat{g}) = F_{g|\hat{g}}\left(\frac{e^{R_{|\hat{g}}}-1}{P}\right).
\end{align}
Also, the instantaneous throughput for a given $\hat{g}$ is
\begin{align}\label{eq_opteta}
\eta_{|\hat {g}}\left(R_{|\hat{g}}\right)=R_{|\hat{g}}\left(1-\Pr\left(\log(1+gP)<R_{|\hat{g}}\right)\right),
\end{align}
and the optimal rate adaptation maximizing the  instantaneous throughput is obtained by
\begin{align}\label{eq_optR}
    R_{|\hat{g}}^{\text{opt}}&=\argmax_{R_{|\hat{g}}\geq 0} \left\{ \left(1-\Pr\left(\log(1+gP)<R_{|\hat{g}}\right)\right)R_{|\hat{g}} \right\}\nonumber\\
    &=\argmax_{R_{|\hat{g}}\geq 0} \left\{ \left(1-F_{g|\hat{g}}\left(\frac{e^{R_{|\hat{g}}}-1}{P}\right)\right) R_{|\hat{g}}\right\}\nonumber\\
    &=\argmax_{R_{|\hat{g}}\geq 0} \left\{ Q_1\left(  \sqrt{\frac{2\hat{g}}{\sigma^2}}, \sqrt{\frac{2(e^{R_{|\hat{g}}}-1)}{P\sigma^2}} \right)R_{|\hat{g}} \right\},
\end{align}
where the last equality comes from (\ref{eq_cdf}). 

Using the derivatives of the Marcum $Q$-function, (\ref{eq_optR}) does not have a closed-form solution\footnote{To solve (\ref{eq_optR}), one can use different approximation methods of the Marcum $Q$-function, e.g., \cite{Guo2019WCLrate}. Here, we use Lemma \ref{Lemma1}/Corollaries \ref{coro1}-\ref{coro2} to solve (\ref{eq_optR}) to show the usefulness of the semi-linear approximation method. As shown, in Figs. \ref{fig_Figure5} and \ref{fig_Figure6}, the semi-linear approximation is tight for a broad range of parameter settings while it simplifies the optimization problem considerably. }. For this reason,  Lemma \ref{Lemma4}  uses the semi-linear approximation scheme of Lemma \ref{Lemma1} and Corollaries \ref{coro1}-\ref{coro2} to find the optimal data rate maximizing the instantaneous throughput.

\begin{lem}\label{Lemma4}
For a given channel realization $\hat{g}$, the throughput-optimized rate allocation is approximately given by
\begin{align}\label{eq_appRF}
     R_{|\hat{g}}^{\text{opt}}
     \simeq 2\mathcal{W}\left(\frac{(1+o_1o_2-o_3)e\sqrt{2P\sigma^2}}{2o_1}-1\right),
\end{align}
where $\mathcal{W}(\cdot)$ denotes the Lambert $\mathcal{W}$-function.
\end{lem}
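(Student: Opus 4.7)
The goal is to solve the optimization (\ref{eq_optR}), which is transcendental because of the Marcum $Q$-function. The natural plan is to (i) replace $Q_1$ by its semi-linear approximation from Lemma \ref{Lemma1}, (ii) take the first-order optimality condition, (iii) approximate $e^{R}-1\simeq e^{R}$ in the high-SNR/moderate-rate regime, and (iv) recognize the resulting equation as the defining relation of the Lambert $\mathcal{W}$-function.

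\textbf{Step 1 (linearize the Marcum $Q$).} With the identification $\alpha=\sqrt{2\hat g/\sigma^{2}}$ and $\beta(R)=\sqrt{2(e^{R}-1)/(P\sigma^{2})}$, Lemma \ref{Lemma1} gives, for $\beta$ in the middle region, $Q_{1}(\alpha,\beta)\simeq n_{1}+n_{2}\beta$ with $n_{1},n_{2}$ as in (\ref{eq_n1})--(\ref{eq_n2}); note $n_{2}=-\beta_{0}e^{-(\alpha^{2}+\beta_{0}^{2})/2}I_{0}(\alpha\beta_{0})<0$. Substituting into (\ref{eq_opteta}), the objective becomes $f(R)=R\bigl(n_{1}+n_{2}\beta(R)\bigr)$, which is now an explicit (though still transcendental) function of $R$.

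\textbf{Step 2 (first-order condition).} Using $\beta'(R)=e^{R}/\bigl(\beta(R)P\sigma^{2}\bigr)$, setting $f'(R)=0$ and clearing $\beta(R)P\sigma^{2}$ yields
\begin{align*}
n_{1}\beta(R)P\sigma^{2}+n_{2}e^{R}(2+R)-2n_{2}=0,
\end{align*}
after using $\beta^{2}P\sigma^{2}=2(e^{R}-1)$.

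\textbf{Step 3 (high-SNR simplification and Lambert substitution).} For moderate-to-large $R$ one has $e^{R}-1\simeq e^{R}$, so $\beta(R)\simeq e^{R/2}\sqrt{2/(P\sigma^{2})}$ and the constant $-2n_{2}$ is subdominant relative to the exponentially growing terms. The optimality condition then collapses to
\begin{align*}
n_{1}\sqrt{2P\sigma^{2}}\,e^{R/2}+2n_{2}e^{R}(1+R/2)=0,
\end{align*}
i.e., $(1+R/2)e^{R/2}=-n_{1}\sqrt{2P\sigma^{2}}/(2n_{2})$. Introducing $u=1+R/2$ turns this into $ue^{u}=-n_{1}e\sqrt{2P\sigma^{2}}/(2n_{2})$, the canonical Lambert $\mathcal{W}$ form.

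\textbf{Step 4 (inversion and rewriting in $o_{1},o_{2},o_{3}$).} Inverting gives $u=\mathcal{W}\!\bigl(-n_{1}e\sqrt{2P\sigma^{2}}/(2n_{2})\bigr)$ and hence $R=2(u-1)$. Writing $o_{1}=-n_{2}=\beta_{0}e^{-(\alpha^{2}+\beta_{0}^{2})/2}I_{0}(\alpha\beta_{0})$, $o_{2}=\beta_{0}$, and $o_{3}=1-Q_{1}(\alpha,\beta_{0})$, one checks from (\ref{eq_n1}) that $n_{1}=1+o_{1}o_{2}-o_{3}$, so the argument of $\mathcal{W}$ becomes $(1+o_{1}o_{2}-o_{3})e\sqrt{2P\sigma^{2}}/(2o_{1})$, matching the statement of Lemma \ref{Lemma4} up to the placement of the $-1$ inside vs.\ outside $\mathcal{W}$.

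\textbf{Main obstacle.} The difficulty is not the linearization of $Q_{1}$ itself but taming the $\sqrt{e^{R}-1}$ that the change of variable $\beta(R)$ injects into the stationarity equation; this is what would prevent a direct closed form. The key trick is the twofold approximation $e^{R}-1\simeq e^{R}$ together with the shift $u=1+R/2$, which is exactly engineered to map the equation onto $ue^{u}=\mathrm{const}$. A secondary concern is verifying that the optimal $\beta^{\star}$ actually lies in the linear ``middle'' region $[c_{1},c_{2}]$ of Lemma \ref{Lemma1}; outside that region the objective is monotone (strictly zero beyond $c_{2}$ and strictly increasing with $R$ before $c_{1}$), so checking boundary cases completes the argument.
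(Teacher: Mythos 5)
Your proof follows essentially the same route as the paper: linearize $Q_1$ via Lemma \ref{Lemma1} (equivalently, $Q_1(\alpha,\beta)\simeq n_1+n_2\beta$ with $n_1=1+o_1o_2-o_3$, $n_2=-o_1$), write the stationarity condition in $R$, apply $e^{R}-1\simeq e^{R}$ and drop the subdominant constant, and invert $(1+R/2)e^{R/2+1}=(1+o_1o_2-o_3)e\sqrt{2P\sigma^2}/(2o_1)$ with the Lambert $\mathcal{W}$-function, which is exactly the paper's chain of steps $(b)$--$(c)$. The discrepancy you flag at the end is real and is an error in the paper rather than in your argument: inverting $xe^x=y$ with $x=1+R/2$ gives $R=2\mathcal{W}(y)-2$, not $2\mathcal{W}(y-1)$ as written in (\ref{eq_appRF}), so the $-1$ should sit outside $\mathcal{W}$ and be scaled by $2$.
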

\begin{proof}
The  approximation results of Lemma \ref{Lemma1} and Corollaries \ref{coro1}-\ref{coro2} can be generalized by $y(\alpha,\beta)\simeq\tilde{\mathcal{Z}}_{\text{general}}(\alpha,\beta)$ where
\begin{align}
\tilde{\mathcal{Z}}_{\text{general}}(\alpha,\beta)\simeq
\begin{cases}
0,~~~~~~~~~~~~~~~~~~~~\mathrm{if}~\beta < c_1(\alpha)  \\ 
o_1(\alpha)(\beta-o_2(\alpha)) +o_3,~~~\mathrm{else}\\ 
1,~~~~~~~~~~~~~~~~~~~~\mathrm{if}~ \beta> c_2(\alpha).
\end{cases}
\end{align}
$o_i,i=1,2,3$, are given by (\ref{eq_lema1}), (\ref{eq_coro1}), or (\ref{eq_coro2}) depending on if we use Lemma \ref{Lemma1} or Corollaries \ref{coro1}-\ref{coro2}. In this way, (\ref{eq_opteta}) is approximated as
\begin{align}\label{eq_appR}
    \eta_{|\hat {g}}\simeq R_{|\hat{g}}\left(1-o_1(\alpha)\beta + o_1(\alpha)o_2(\alpha) - o_3(\alpha)\right),
\end{align}
where $\alpha = \sqrt{\frac{2\hat{g}}{\sigma^2}}$. To simplify the equation, we omit $\alpha$ in the following since it is a constant for given $\hat{g}$, $\sigma$. Then, setting the derivative of (\ref{eq_appR}) with respect to $R$ equal to zero, we obtain
\begin{align}
    & R_{|\hat{g}}^{\text{opt}}  \nonumber\\
    & = \operatorname*{arg}_{R_{|\hat{g}}\geq 0}\left\{ 1+o_1o_2-o_3-o_1\left(\frac{(R_{|\hat{g}}+2)e^{R_{|\hat{g}}}-2}{\sqrt{2P\sigma^2\left(e^{R_{|\hat{g}}}-1\right)}}\right)=0\right\}\nonumber\\
    & \overset{(b)}{\simeq} \operatorname*{arg}_{R_{|\hat{g}}\geq 0}\left\{ \left(\frac{R_{|\hat{g}}}{2}+1\right)e^{\frac{R_{|\hat{g}}}{2}+1} = \frac{(1+o_1o_2-o_3)e\sqrt{2P\sigma^2}}{2o_1}\right\}\nonumber\\
    & \overset{(c)}{=} 2\mathcal{W}\left(\frac{(1+o_1o_2-o_3)e\sqrt{2P\sigma^2}}{2o_1}-1\right).
\end{align}
Here,  $(b)$ comes from $e^{R_{|\hat{g}}}-1 \simeq e^{R_{|\hat{g}}} $ and $(R_{|\hat{g}}+2)e^{R_{|\hat{g}}}-2 \simeq (R_{|\hat{g}}+2)e^{R_{|\hat{g}}} $ which are appropriate at moderate/high values of $R_{|\hat{g}}$. Also, $(c)$ is obtained by the definition of the Lambert $\mathcal{W}$-function $xe^x = y \Leftrightarrow x = \mathcal{W}(y)$ \cite{corless1996lambertw}. 

\end{proof}

Finally, the expected throughput, averaged over multiple time slots, is obtained by 
\begin{align}\label{eq_etaexpected}
  \eta = \mathbb{E}\left\{\eta|{\hat {g}}(R_{|\hat{g}}^{\text{opt}})\right\}  
\end{align}
with expectation over $\hat g$.  

Using (\ref{eq_appRF}) and the approximation \cite[Theorem. 2.1]{hoorfar2007approximation}
\begin{align}
    \mathcal{W}(x) \simeq \log(x)-\log\log(x), x\geq0,
\end{align}
we obtain
\begin{align}
    R_{|\hat{g}}^{\text{opt}}\simeq 2\log\left(\frac{(1+o_1o_2-o_3)e\sqrt{2P\sigma^2}}{2o_1}-1\right)-\nonumber\\
    2\log\log\left(\frac{(1+o_1o_2-o_3)e\sqrt{2P\sigma^2}}{2o_1}-1\right)
\end{align}
which implies that as the transmit power increases, the optimal instantaneous rate increases with the square root of the transmit power (approximately) logarithmically.

\subsection{On the Effect of Imperfect Channel Estimation}
In Section \ref{In Section III.C,},  for simplicity of discussions, we assumed perfect channel estimation of the BS-PA channel, i.e., $\hat{h}$, at the BS. The experiments in e.g., \cite{Jamaly2014EuCAP}, show that
we may not achieve perfect correlation although we measure at the correct location.  For example, in the case of FDD, the error sources for the CSIT $\hat{h}$ at the BS of the BS-PA channel could be channel estimation error at the PA. These deviations of channel estimation would invalidate the assumption of perfect channel information, and should be considered in the system design. Here, we follow the similar approach as in, e.g., \cite{Wang2007TWCperformance}, to add the effect of estimation error of $\hat{h}$ as an independent additive Gaussian variable whose variance is given by the accuracy of channel estimation. 

Let us define $\tilde h$ as the estimate of $\hat h$ at the BS. Then, following \cite{Wang2007TWCperformance}, we further develop our channel model  (\ref{eq_H}) as
\begin{align}\label{eq_Htp}
    \tilde{h} = \kappa \hat{h} + \sqrt{1-\kappa^2} z, 
\end{align}
for each time slot, where $z \sim \mathcal{CN}(0,1)$ is a Gaussian noise which is uncorrelated with $\hat{h}$. Also, $\kappa$ is a known correlation factor which represents the estimation error of $\hat{h}$ by $\kappa = \frac{\mathbb{E}\{\tilde{h}\hat{h}^*\}}{\mathbb{E}\{|\hat{h}|^2\}}$, i.e., the estimation error decreases with $\kappa$. Substituting (\ref{eq_Htp}) into (\ref{eq_H}) to replace $\hat{h}$ as $\kappa \hat{h} + \sqrt{1-\kappa^2} z$, we have the following estimate of the BS-RA channel at the BS side
\begin{align}\label{eq_Ht}
    h = \kappa\sqrt{1-\sigma^2}\hat{h}+\kappa\sigma q+\sqrt{1-\kappa^2}z.
\end{align}
Then, because  $\kappa\sigma q + \sqrt{1-\kappa^2}z$ is equivalent to a new Gaussian variable $w \sim\mathcal{CN}\left(0,(\kappa\sigma)^2+1-\kappa^2\right)$, we can follow the same procedure as in (\ref{eq_optR})-(\ref{eq_appRF}) to analyze the system performance with imperfect channel estimation of the PA (see Figs. \ref{fig_Figure5}-\ref{fig_Figure6} for more discussions). Finally, note that, one can involve additional independent Gaussian variables to model other error sources, e.g., imperfect antenna coupling \cite{Jamaly2019IETeffects}, and follow the same procedure as in (\ref{eq_Ht}) to evaluate the performance.

\subsection{Simulation Results}
\begin{figure}
\centering
  \includegraphics[width=1.0\columnwidth]{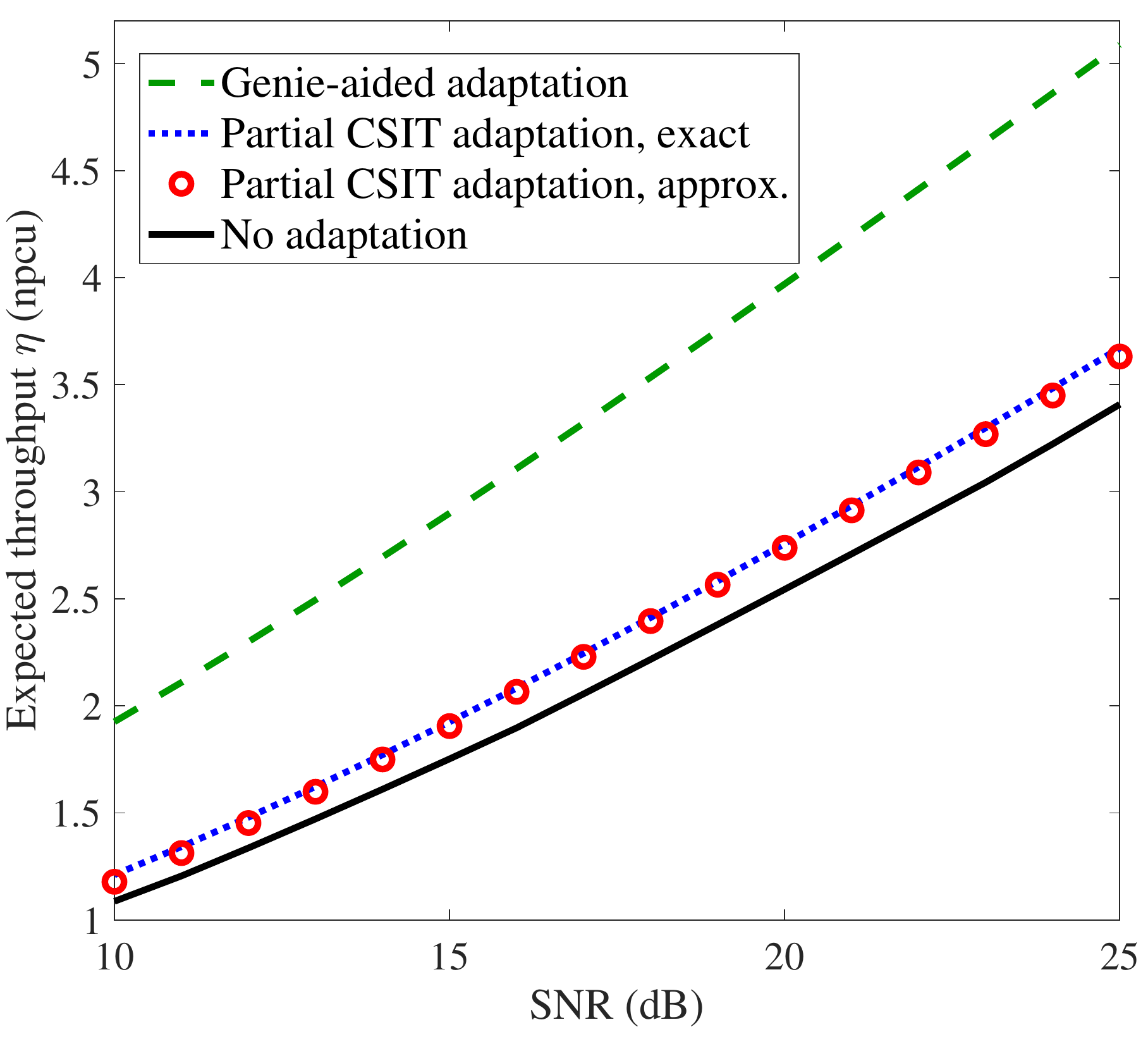}\\
\caption{Expected throughput $\eta$ in different cases, $v$ = 114 km/h, $\kappa$ = 1, and $\delta = $ 5 ms. Both the exact values estimated from  simulations as well as the analytical approximations from Lemma \ref{Lemma4} are presented.}\label{fig_Figure5}
\end{figure}
In this part, we study the performance of the PA system, and verify the tightness of the approximation scheme of Lemma \ref{Lemma4}. Particularly, we present the average throughput (\ref{eq_etaexpected}) and the outage probability of the PA setup for different vehicle speeds/channel estimation errors. As an ultimate upper bound for the proposed rate adaptation scheme, we consider a genie-aided setup where we assume that the BS has perfect CSIT of the BS-RA link without uncertainty/outage probability. Then, as a lower-bound of the system performance, we consider the cases with no CSIT/rate adaptation as shown in Fig. \ref{fig_Figure5}, i.e., $\sigma=1$ in (\ref{eq_H}). Here, the simulation results for the cases of no adaptation are obtained with only one antenna and no CSIT. In this case,  the data is sent with a fixed rate $R$ and it is decoded if $R<\log(1+gP)$, i.e., $g>\frac{e^{R}-1}{P}$. In this way, assuming  Rayleigh fading conditions, the average rate over all possible values of  $g$ is given by
\begin{align}\label{eq_bench}
    R^{\text{No-adaptation}} = \int_{\frac{e^{R}-1}{P}}^{\infty} Re^{-x} \text{d}x = Re^{-\frac{e^{R}-1}{P}},
\end{align}
and the optimal rate allocation is found by setting the derivative of (\ref{eq_bench}) with respect to $R$ equal to zero leading to $\tilde{R} = \mathcal{W}(P)$. Then, the throughput is calculated as 
\begin{align}\label{eq_etanocsi}
  \eta^{\text{No-adaptation}} =\mathcal{W}(P)e^{-\frac{e^{\mathcal{W}(P)}-1}{P}}.  
\end{align}
Also, in the simulations, we set  $f_\text{c}$ = 2.68 GHz in coherence with testbed results in, e.g., \cite{Jamaly2014EuCAP}, and $d_\text{a} = 1.5\lambda$ to avoid coupling effects. Finally, each point in the figures is obtained by averaging the system performance over $1\times10^5$ channel realizations.

\begin{figure}
\centering
  \includegraphics[width=1.0\columnwidth]{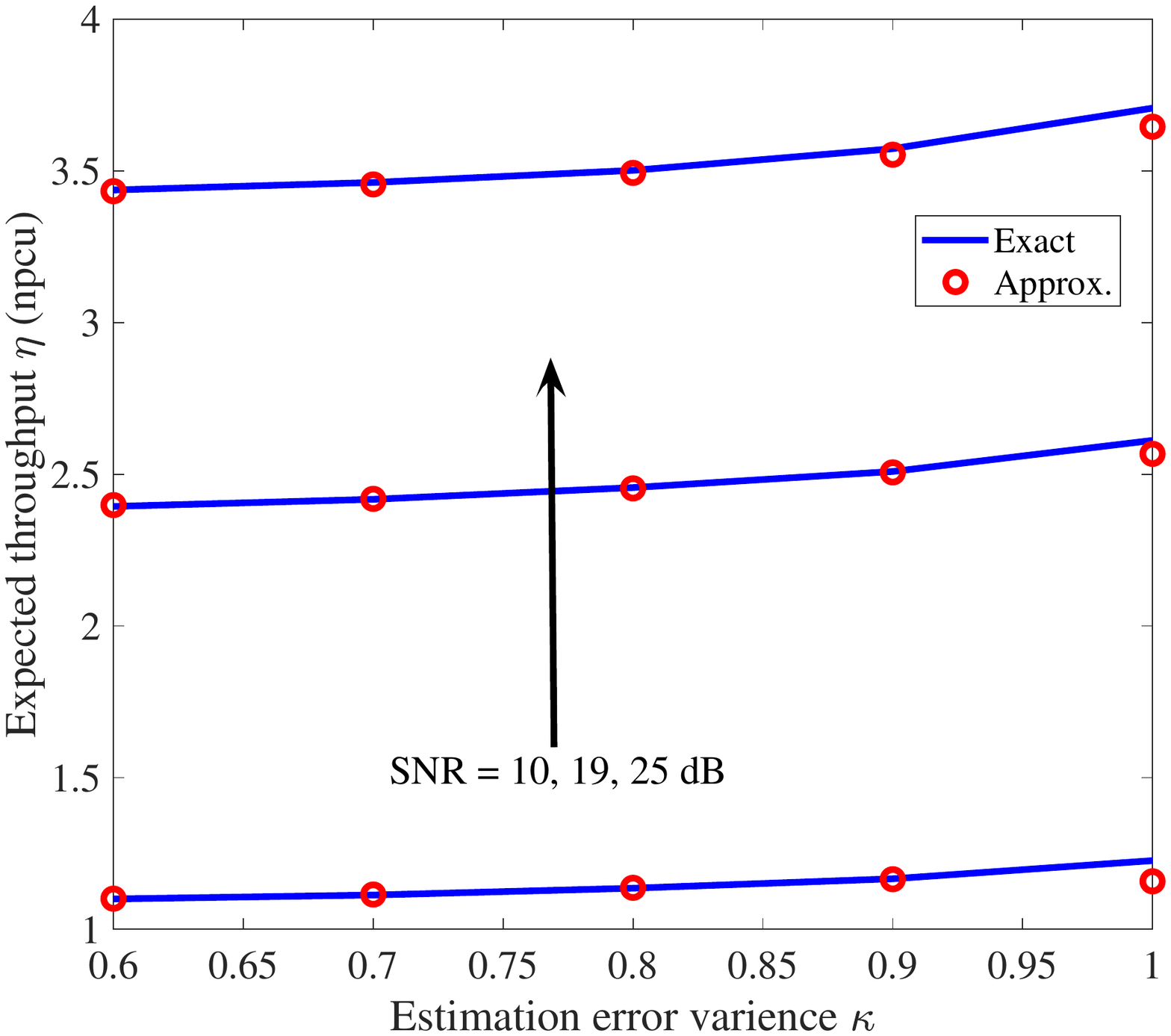}\\
\caption{Expected throughput $\eta$ for different estimation error variance $\kappa$ }with SNR = 10, 19, 25 dB, in the case of partial CSIT, exact and approximation,  $v = $  114.5 km/h, and $\delta = 5$ ms. Both the exact values estimated from simulations as well as the analytical approximations from Lemma \ref{Lemma4} are presented.\label{fig_Figure6}
\end{figure}

In Fig. \ref{fig_Figure5}, we show the expected throughput $\eta$  in different cases for a broad range of SNRs. Here, because the noise has unit variance, we define the SNR as $10\log_{10}P$ in Fig. \ref{fig_Figure5}. Also, we set $v = 114$ km/h which corresponds to mismatch distance $d\simeq0.1\lambda$.  The analytical results obtained by Lemma \ref{Lemma4} and Corollary \ref{coro2}, i.e., the approximation of (\ref{eq_optR}), are also presented. Note that, we have also verified the approximation result of Lemma \ref{Lemma4} while using Lemma \ref{Lemma1}/Corollary \ref{coro1}. Then, because the results are similar to those presented in Fig. \ref{fig_Figure5}, they are not included in the figure. Moreover, the figure shows the results of (\ref{eq_etanocsi}) with no CSIT/rate adaptation as a benchmark. Finally, Fig. \ref{fig_Figure6} studies the expected throughput $\eta$ for different values of estimation error variance $\kappa$ with SNR = 10, 19, 25 dB, in the cases with partial CSIT. Also, the figure evaluates the tightness of the approximation results obtained by Lemma \ref{Lemma4}. Here,  we set $v =$ 114.5 km/h and $\delta = 5$ ms.

Setting SNR = 23 dB and $v = $ 120, 150 km/h in Fig. \ref{fig_Figure7},  we study the effect of the processing delay $\delta$ on the throughput. Finally, the outage probability is evaluated in Fig. \ref{fig_Figure8}, where the results are presented for different speeds with SNR = 10 dB, in the cases with partial CSIT. Here, the mismatch distance $d$ is up to $0.25\lambda$. Also, we present  the outage probability for $\delta = 5.35$ ms and $\delta = 4.68$ ms in Fig.  \ref{fig_Figure8}.

\begin{figure}
\centering
  \includegraphics[width=1.0\columnwidth]{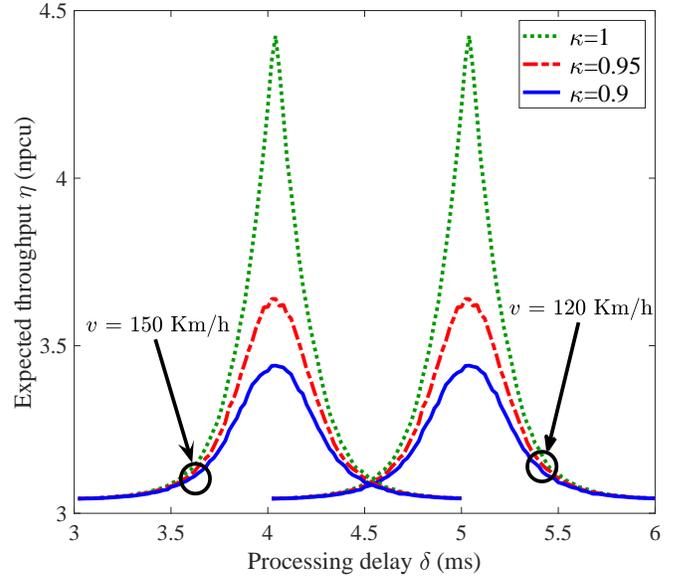}\\
\caption{Expected throughput $\eta$ for different processing delays with SNR = 23 dB and $v$ = 120, 150 km/h in the case of partial adaptation. }\label{fig_Figure7}

\end{figure}

\begin{figure}
\centering
  \includegraphics[width=1.0\columnwidth]{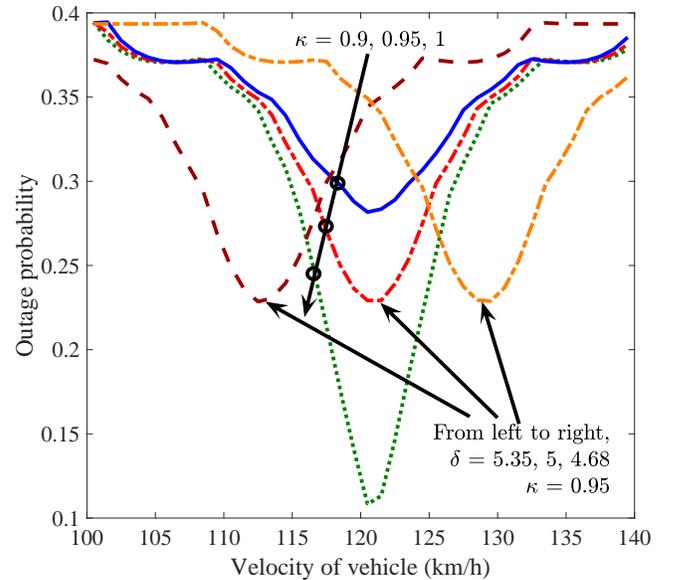}\\

\caption{Outage probability for different velocities with SNR = 10 dB, in the case of partial CSIT.}\label{fig_Figure8}

\end{figure}

From the figures, the following points can be concluded:
\begin{itemize}
    \item The approximation scheme of Lemma \ref{Lemma4} is tight for a broad range of parameter settings (Figs. \ref{fig_Figure5}, \ref{fig_Figure6}). Thus, the throughput-optimized rate allocation can be well approximated by (\ref{eq_appRF}), and the semi-linear approximation of Lemma \ref{Lemma1}/Corollaries \ref{coro1}-\ref{coro2} is a good approach to study the considered optimization problem.
    
    \item The deployment of the PA increases the throughput, compared to the open-loop setup, especially in moderate/high SNRs (Fig. \ref{fig_Figure5}). Also, the throughput decreases when the estimation error is considered, i.e., the variance $\kappa$ decreases. Finally, as can be seen in Figs.  \ref{fig_Figure5}, \ref{fig_Figure6}, with rate adaptation,  and without optimizing the processing delay/vehicle speed, the effect of estimation error on the expected throughput  is small for the cases with moderate mismatch  unless for large values of $\kappa$.

    \item As it can be seen in Figs. \ref{fig_Figure7} and \ref{fig_Figure8}, for different channel estimation errors, there are optimal values for the vehicle speed and the BS processing delay optimizing the system throughput and outage probability. Note that the presence of the optimal speed/processing delay can be proved via (\ref{eq_d}) as well. Finally, the optimal value of the vehicle speed, in terms of throughput/outage probability, decreases with the processing delay. However, the optimal vehicle speed/processing delay, in terms of throughput/outage probability, is almost insensitive to the channel estimation error.

    \item With perfect channel estimation, the throughput/outage probability is sensitive to the speed variation, if we move away from the optimal speed (Figs. \ref{fig_Figure7} and \ref{fig_Figure8}). That is, we can achieve significant benefits with the PA. However, to gain the full potential there is a higher requirement on the mismatch distance when $\kappa$ increases (Figs. \ref{fig_Figure7} and \ref{fig_Figure8}). Finally, considering Figs. \ref{fig_Figure7} and \ref{fig_Figure8}, it is expected that adapting the processing delay, as a function of the vehicle speed, will be an effective approach to improve the performance of PA setups. 
\end{itemize}

\section{Conclusion}\label{sec:conclu}
We derived a simple semi-linear approximation method for the first-order Marcum $Q$-function, as one of the functions of interest in different problem formulations of wireless networks. As we showed through various analysis, while the proposed approximation is not tight at the tails of the function, it is useful in different optimization- and expectation-based problem formulations. Particularly, as an application of interest, we used our proposed approximation to analyze the performance of PA setups using  rate adaptation. As we showed, with different levels of channel estimation error/processing delay, adaptive rate allocation can effectively compensate for the spatial mismatch problem, and improve the throughput/outage probability of PA networks.  It is expected that increasing the number of RA antennas will further improve the performance of the PA systems.



\appendices
\section{Proof of Lemma \ref{Lemma2}}
  \label{proof_Lemma2}
Using Corollary \ref{coro1},  we have
\begin{align}
G(\alpha,\rho)\simeq
\begin{cases}
\int_\rho^{\breve{c}_1}0\text{d}x + \int_{\breve{c}_1}^{\breve{c}_2}e^{-nx}x^{m}\times\nonumber\\
~~~\left(\frac{1}{\sqrt{2\pi}}(x-\alpha) + \frac{1}{2}\left(1-\frac{1}{\sqrt{2\pi\alpha^2}}\right)\right)\text{d}x +\nonumber\\
~~~\int_{\breve{c}_2}^{\infty}e^{-nx}x^{m}\text{d}x, ~~~~~~~~~~~~~~\mathrm{if}~  \rho < \breve{c}_1  \\ 
 \int_{\rho}^{\breve{c}_2}e^{-nx}x^{m}\times\nonumber\\
~~~\left(\frac{1}{\sqrt{2\pi}}(x-\alpha) + \frac{1}{2}\left(1-\frac{1}{\sqrt{2\pi\alpha^2}}\right)\right)\text{d}x +\nonumber\\
~~~\int_{\breve{c}_2}^{\infty}e^{-nx}x^{m}\text{d}x, ~~~~~~~~~~~~~~\mathrm{if}~  \breve{c}_1\leq\rho<\breve{c}_2,\nonumber\\
\int_\rho^{\infty} e^{-nx}x^{m}\text{d}x, ~~~~~~~~~~~~~~~~~~\mathrm{if}~  \rho\geq\breve{c}_2.
\end{cases}
\end{align}

Then, for $\rho\geq \breve{c}_2$, we obtain
\begin{align}
&    \int_\rho^\infty e^{-nx}\times x^{m}(1-Q_1(\alpha,x))\text{d}x\nonumber\\
  &  \overset{(d)}\simeq\int_\rho^\infty e^{-nx}\times x^{m}\text{d}x
   \overset{(e)}= \Gamma(m+1,n\rho)n^{-m-1}, 
\end{align}
while for $\rho<\breve{c}_2$, we have

\begin{align}
 &  \int_\rho^\infty e^{-nx}\times x^{m}(1-Q_1(\alpha,x))\text{d}x\nonumber\\
&    \overset{(f)}\simeq \int_{\max(\breve{c}_1,\rho)}^{\breve{c}_2} \left(\frac{1}{\sqrt{2\pi}}(x-\alpha) + 
\frac{1}{2}\left(1-\frac{1}{\sqrt{2\pi\alpha^2}}\right)\right)\times\nonumber\\
& ~~~~~~e^{-nx}x^{m}\text{d}x +\int_{\breve{c}_2}^\infty e^{-nx} x^{m}\text{d}x\nonumber\\
& \overset{(g)}= \Gamma(m+1,n\breve{c}_2)n^{-m-1} + \nonumber\\
&~~~~~~\left(-\frac{\alpha}{\sqrt{2\pi}}+0.5*\left(1-\frac{1}{\sqrt{2\pi\alpha^2}}\right)\right)\times n^{-m-1}\times\nonumber\\
&~~~~~~\left(\Gamma\left(m+1,n\max(\breve{c}_1,\rho)\right)-\Gamma\left(m+1,n\breve{c}_2\right)\right)+\nonumber\\
&~~~~~~\left(\Gamma(m+2,n\max(\breve{c}_1,\rho))-\Gamma(m+2,n\breve{c}_2)\right)\frac{n^{-m-2}}{\sqrt{2\pi}}.
\end{align}
Note that $(d)$ and $(f)$ come from Corollary \ref{coro1} while $(e)$ and $(g)$ use the fact that $\Gamma(s,x)\rightarrow 0$ as $x\rightarrow\infty$.

\section{Proof of Lemma \ref{Lemma3}}
\label{proof_Lemma3}
Using Lemma \ref{Lemma1}, the  integral (\ref{eq_integralT})  can be approximated as

1) for $\theta_2>\theta_1>c_2$, $T(\alpha,m,a,\theta_1,\theta_2) \simeq 0$,

2) for $c_1<\theta_1\leq c_2$,
\begin{align}
  &  T(\alpha,m,a,\theta_1,\theta_2) = \nonumber\\& ~~\int_{\theta_1}^{\min(c_2,\theta_2)} (n_2x+n_1)e^{-mx}\log(1+ax) \text{d}x,
\end{align}

3) for $\theta_1<c_1,c_2>\theta_2>c_1$,
\begin{align}
   & T(\alpha,m,a,\theta_1,\theta_2) \simeq \int_{\theta_1}^{c_1} e^{-mx}\log(1+ax)\text{d}x + \nonumber\\
   &~~ \int_{c_1}^{\theta_2} (n_2x+n_1)e^{-mx}\log(1+ax) \text{d}x,
\end{align}

4) for $\theta_1<\theta_2<c_1$,
\begin{align}
   & T(\alpha,m,a,\theta_1,\theta_2) \simeq \int_{\theta_1}^{\theta_2} e^{-mx}\log(1+ax)\text{d}x.
\end{align}

Then, consider case 3 where
\begin{align}
    T(\alpha,m,a,\theta_1,\theta_2) &= \int_{\theta_1}^{\theta_2} e^{-mx}\log(1+ax)Q_1(\alpha,x)\text{d}x\nonumber\\
    &\simeq \int_{\theta_1}^{c_1} e^{-mx}\log(1+ax)\text{d}x + \nonumber\\
   & \int_{c_1}^{\theta_2} (n_2x+n_1)e^{-mx}\log(1+ax) \text{d}x\nonumber\\
   & =  \mathcal{F}_1(c_1) -  \mathcal{F}_1(\theta_1) +  \mathcal{F}_2(\theta_2) -  \mathcal{F}_2(c_1),
\end{align}
with $n_1$ and $n_2$ being functions of the constant $\alpha$ and given by (\ref{eq_n1}) and (\ref{eq_n2}), respectively. The other cases can be proved with the same procedure. Moreover, the functions $\mathcal{F}_1(x)$ and $\mathcal{F}_2(x)$ are obtained by
\begin{align}
    \mathcal{F}_1(x) &= \int e^{-mx}\log(1+ax)\text{d}x\nonumber\\
    &\overset{(h)}= -\frac{e^{-mx}\log(ax+1)}{m}-\int -\frac{ae^{-mx}}{m(ax+1)} \text{d}x\nonumber\\
    & \overset{(i)}=  \frac{1}{m}\left(-e^{\frac{m}{a}}\operatorname{E_1}\left(mx+\frac{m}{a}\right)-e^{-mx}\log(ax+1)\right)+C,
\end{align}
and
\begin{align}
     \mathcal{F}_2(x)& =  \int (n_2x+n_1)e^{-mx}\log(1+ax) \text{d}x\nonumber\\
    &\overset{(j)} = -\frac{(mn_2x+n_2+mn_1)e^{-mx}\log(1+ax)}{m^2} - \nonumber\\
    &~~~\int \frac{a(-mn_2x-n2-mn_1)e^{-mx}}{m^2(ax+1)} \text{d}x\nonumber\\
    & \overset{(k)}= -\frac{(mn_2x+n_2+mn_1)e^{-mx}\log(1+ax)}{m^2} - \nonumber\\
    &~~~ -\frac{1}{a}\bigg(e^{\frac{m}{a}}(mn_2x+n_2+mn_1)\nonumber\\
    &~~~ \operatorname{E_1}\left(mx+\frac{m}{a}\right)\bigg)+
    \int - \frac{me^{\frac{m}{a}}n_2}{\operatorname{E_1}\left(mx+\frac{m}{a}\right)} \text{d}x\nonumber\\
    & \overset{(l)}= -\frac{(mn_2x+n_2+mn_1)e^{-mx}\log(1+ax)}{m^2} - \nonumber\\
    &~~~ -\frac{1}{a}\bigg(e^{\frac{m}{a}}(mn_2x+n_2+mn_1) \operatorname{E_1}\left(mx+\frac{m}{a}\right)\bigg)+\nonumber\\
    &~~~  \frac{n_2e^{-mx}}{a} - \frac{e^{\frac{m}{a}}n_2(mx+\frac{m}{a}) \operatorname{E_1}\left(mx+\frac{m}{a}\right)}{a}+C,
\end{align}
where $(h)$, $(j)$ and $(k)$ come from partial integration and some manipulations. Also, $(i)$ and $(l)$ use \cite[p. 195]{geller1969table}
\begin{align}
    \int \operatorname{E_1}(u) \text{d}u = u\operatorname{E_1}(u)-e^{-u},
\end{align}
with $\operatorname{E_1}(x) = \int_x^{\infty} \frac{e^{-t}}{t} \mathrm{d}t$ being the Exponential Integral function \cite[p. 228, (5.1.1)]{abramowitz1999ia}.

\balance
\bibliographystyle{IEEEtran}
\bibliography{main}

\end{document}